\def\techreport{}
\setlist[itemize]{noitemsep, topsep=3pt}
\newcommand{\appref}[1]{Appx.~\ref{#1}}
\newcommand{\sectref}[1]{Sec.~\ref{#1}}
\newcommand{\figref}[1]{Fig.~\ref{#1}}
\newcommand{\tabref}[1]{Table~\ref{#1}}
\newcommand{\egref}[1]{Example~\ref{#1}}
\newcommand{\egegref}[2]{Examples~\ref{#1} and \ref{#2}}
\newcommand{\eqnref}[1]{(\ref{#1})}
\newcommand{\thmref}[1]{Thm.~\ref{#1}}
\newcommand{\propref}[1]{Proposition~\ref{#1}}
\newcommand{\defref}[1]{Defn.~\ref{#1}}
\newcounter{exampcount}
\newenvironment{examp}
{\refstepcounter{exampcount}
\vskip6pt\noindent
{\bf Example \arabic{exampcount}.}}
{}
\newcommand{\startpara}[1]{{%
\vskip6pt\noindent
{\bf #1.}}}
\def\ra{{\rightarrow}}
\def\cB{{\mathcal{B}}}
\def\cX{{\mathcal{X}}}
\def\Nset{\mathbb{N}}
\def\Qset{\mathbb{Q}}
\def\Rset{\mathbb{R}}
\def\Tset{\mathbb{T}}
\def\ra{\rightarrow} 
\def\rmdef{\,\stackrel{\mbox{\rm {\tiny def}}}{=}}
\newcommand{\dirac}[1]{\delta_{#1}} 
\newcommand{\sem}[1]{ [ \! [ {#1}  ]  \! ]} 
\newcommand{\true}{\mathtt{true}} 
\def\E{\mathbb{E}}
\renewcommand{\leq}{\leqslant}
\renewcommand{\geq}{\geqslant}
\renewcommand{\le}{\leqslant}
\def\squareforqed{\hbox{\rlap{$\sqcap$}$\sqcup$}}
\def\qed{\ifmmode\squareforqed\else{\unskip\nobreak\hfil
\penalty50\hskip1em\null\nobreak\hfil\squareforqed
\parfillskip=0pt\finalhyphendemerits=0\endgraf}\fi}
\newcommand\mdp{{\sf M}}
\newcommand\pomdp{{\sf M}}
\newcommand\pta{{\sf P}}
\newcommand\popta{{\sf P}}
\newcommand\sempopta{{\sem{\popta}}}
\newcommand\sinit{{\bar{s}}}
\newcommand\binit{{\bar{b}}}
\newcommand\oinit{{\bar{o}}}
\newcommand\act{\mathit{A}}
\newcommand\acts{\act}
\newcommand\dist{{\mathit{Dist}}}
\def\Pr{{\mathit{Pr}}}
\newcommand\strat{{\sigma}}
\newcommand\strats{{\Sigma}}
\newcommand{\preach}[3]{{\mathit{Pr}_{#1}^{#2}(\future #3)}}
\newcommand{\ereach}[3]{{\E_{#1}^{#2}(\future #3)}}
\newcommand{\estrat}[2]{\E_{#1}^{#2}} 
\newcommand{\rf}[2]{{\mathit{rew}^{#1}({#2})}} 
\newcommand{\ipaths}{\mathit{IPaths}}
\newcommand{\fpaths}{\mathit{FPaths}}
\newcommand{\ipat}{\omega}
\newcommand{\fpat}{\pi}
\renewcommand{\Pr}[2]{{\mathit{Pr}_{#1}^{#2}}}
\newcommand{\enab}[1]{A(#1)}
\newcommand{\obs}{\mathcal{O}}
\newcommand{\obsf}{\mathit{obs}}
\def\sat{{\,\models\,}}
\def\until{{\ {\cal U}\ }}
\def\tuntil{{\ {\cal U}^{\leq t}\ }}
\def\calPbp{{\mathtt P}_{\bowtie p}}
\newcommand{\calRbp}{{\mathtt R}_{\bowtie q}}
\def\calP{{\mathtt P}}
\def\calR{{\mathtt R}}
\def\until{{\, {\mathtt U}\, }}
\def\tuntil{{\, {\mathtt U}^{\leq t}\,}}
\def\future{{{\mathtt F}\,}}
\def\tfuture{{{\mathtt F}^{\leq t}\,}}
\newcommand{\scumul}[1]{\mathtt{C}^{#1}} 
\newcommand{\sinstant}[1]{\mathtt{I}^{#1}} 
\newcommand{\sreachrew}[1]{\mathtt{F}\,{#1}} 
\newcommand\probopP{{\mathtt P}}
\newcommand\probop[2]{\probopP_{#1}[\,{#2}\,]}
\newcommand\probopbp[1]{\probop{\bowtie p}{#1}}
\newcommand\rewopR{{\mathtt R}}
\newcommand\rewop[3]{\rewopR_{\,#2}[\,{#3}\,]}
\newcommand\rewopbr[2]{\rewop{#1}{\bowtie q}{#2}}
\newcommand{\ptatuple}{( \loc, \linit, \clocks, \act , \inv, \enb, \probt,\ptarew)}
\newcommand{\poptatuple}{( \loc, \linit, \clocks, \act , \inv, \enb, \probt,\ptarew, \obs_\loc,\obsf_\loc)}
\newcommand{\bptatuple}{( \dist^{\obsf_\loc}, \dirac{\linit}, \clocks, \act , \inv^\cB, \enb^\cB, \probt^\cB,\ptarew^\cB)}
\newcommand{\clocks}{\mathcal{X}}
\newcommand{\valuations}{\Rset^\clocks}
\newcommand{\tvaluations}{\Tset^\clocks}
\newcommand{\vinz}{\models}
\newcommand{\loc}{\mathit{L}}
\newcommand{\probt}{\mathit{prob}}
\newcommand{\linit}{\overline{l}}
\newcommand{\inv}{\mathit{inv}}
\newcommand{\enb}{\mathit{enab}}
\def\bk{{\mathbf{k}}}
\newcommand{\CC}[1]{\mathit{CC}({#1})}
\newcommand{\cc}{\zeta}
\newcommand{\ptarew}{{r}}
\newcommand{\ptaarew}{\ptarew_{\act}}
\newcommand{\lrew}{\ptarew_{L}} 
\definecolor{blue}{rgb}{0,0,1}
\begin{document}

\title{Verification and Control of Partially Observable Probabilistic Real-Time Systems}

\author{Gethin Norman\inst{1} \and David~Parker\inst{2} \and Xueyi Zou\inst{3}}

\institute{School of Computing Science, University of Glasgow, UK
\and School of Computer Science, University of Birmingham, UK
\and Department of Computer Science, University of York, UK}

\authorrunning{Norman \and Parker \and Zou}
\titlerunning{Partially Observable Probabilistic Real-Time Systems}

\maketitle

\begin{abstract}
We propose automated techniques for the verification and control of
probabilistic real-time systems that are only partially observable.
To formally model such systems, we define an extension of probabilistic timed automata
in which local states are partially visible to an observer or controller.
We give a probabilistic temporal logic that can express a range of quantitative properties of these models,
relating to the probability of an event's occurrence or the expected value of a reward measure.
We then propose techniques to either verify that such a property holds
or to synthesise a controller for the model which makes it true.
Our approach is based on an integer discretisation of the model's dense-time behaviour
and a grid-based abstraction of the uncountable belief space induced by partial observability.
The latter is necessarily approximate since the underlying problem is undecidable,
however we show how both lower and upper bounds on numerical results can be generated.
We illustrate the effectiveness of the approach by implementing it in the PRISM model checker
and applying it to several case studies, from the domains of computer security and task scheduling.
\end{abstract}


\section{Introduction}\label{intro-sec}

Guaranteeing the correctness of complex computerised systems
often needs to take into account quantitative aspects of system behaviour.
This includes the modelling of \emph{probabilistic} phenomena,
such as failure rates for physical components,
uncertainty arising from unreliable sensing of a continuous environment,
or the explicit use of randomisation to break symmetry.
It also includes \emph{real-time} characteristics,
such as time-outs or delays in communication or security protocols.
To further complicate matters, such systems are often \emph{nondeterministic}
because their behaviour depends on inputs or instructions from some external entity
such as a controller or scheduler.

Automated verification techniques such as probabilistic model checking
have been successfully used to analyse quantitative properties of probabilistic, real-time systems
across a variety of application domains, including
wireless communication protocols, computer security and task scheduling.
These systems are commonly modelled using \emph{Markov decision processes} (MDPs),
if assuming a discrete notion of time, or \emph{probabilistic timed automata} (PTAs),
if using a dense model of time.
On these models, we can consider two problems:
\emph{verification} that it satisfies some formally specified property
for any possible resolution of nondeterminism;
or, dually, \emph{synthesis} of a controller (i.e., a means to resolve nondeterminism)
under which a property is guaranteed. For either case, an important consideration is the extent to which the system's state
is \emph{observable} to the entity controlling it.
For example, to verify that a security protocol is functioning correctly, 
it may be essential to model the fact that some data held by a participant is not externally visible,
or, when synthesising a controller for a robot, the controller may not be implementable in practice
if it bases its decisions on information that cannot be physically observed.

Partially observable MDPs (POMDPs) are a natural way to extend MDPs in order to tackle this problem.
However, the analysis of POMDPs is considerably more difficult than MDPs
since key problems are undecidable~\cite{MHC03}.
A variety of verification problems have been studied for these models
(see e.g., \cite{dA99b,BBG08,CCT13}) and the use of POMDPs is common in
fields such as AI and planning~\cite{Cas98},
but there is limited progress in the development of practical
techniques for probabilistic verification in this area,
or exploration of their applicability.

In this paper, we present novel techniques for verification and control
of probabilistic real-time systems under partial observability.
We propose a model called \emph{partially observable probabilistic timed automata} (POPTAs),
which extends the existing model of PTAs
with a notion of partial observability. 
The semantics of a POPTA is an infinite-state POMDP.
We then
define a temporal logic, based on \cite{NPS13},
to express properties of POPTAs relating to the probability of an event
or the expected value of various reward measures.
Nondeterminism in a POPTA is resolved by a \emph{strategy}
that decides which actions to take and when to take them,
based only on the history of observations (not states).
The core problems we address are how to
\emph{verify} that a temporal logic property holds for all possible strategies,
and how to \emph{synthesise} a strategy under which the property holds. 

In order to achieve this, we use a combination of techniques.
First, we develop a \emph{digital clocks} discretisation for POPTAs,
which extends the existing notion for PTAs~\cite{KNPS06},
and reduces the analysis to a \emph{finite} POMDP.
We define the conditions under which properties in our temporal logic are
preserved and prove the correctness of the reduction.
To analyse the resulting POMDP, we use grid-based techniques~\cite{Lov91,Yu06}, 
which transform it to a fully observable but continuous-space MDP
and then approximate its solution based on a finite set of grid points.
We use this to construct and solve a strategy for the POMDP.
The result is a pair of lower and upper bounds on the property of interest for the original POPTA.
If the results are not precise enough, we can refine the grid and repeat.

We implemented these methods in a prototype tool based on PRISM~\cite{KNP11},
and investigated their applicability by developing three case studies:
a non-repudiation protocol, a task scheduling problem
and a covert channel prevention device (the NRL pump).
Despite the complexity of POMDP solution methods,
we show that useful results can be obtained, often with precise bounds.
In each case study, nondeterminism, probability, real-time behaviour \emph{and} partial observability
are all crucial ingredients to the analysis, a combination not supported
by any existing techniques or tools.

\startpara{Related work}
POMDPs are common in fields such as AI and planning, and have many applications~\cite{Cas98}.
They have also been studied in the verification community, e.g.~\cite{dA99b,BBG08,CCT13},
establishing undecidability and complexity results
for various qualitative and quantitative verification problems.
Work in this area often also studies related models such as
Rabin's probabilistic automata~\cite{BBG08}, which can be seen as a special case of POMDPs,
and partially observable stochastic games (POSGs)~\cite{CD14}, which generalise them.
More practically oriented work includes:~\cite{GR12},
which proposes a counterexample-driven refinement method
to approximately solve MDPs in which components have partial observability of each other;
and~\cite{CCH+11}, which synthesises concurrent program constructs,
using a search over memoryless strategies in a POSG.
Theoretical results~\cite{BDMP03} and algorithms~\cite{CDL+07,PF12b} have been
developed for synthesis of partially observable timed games.
In \cite{BDMP03}, it is shown that the synthesis problem
is undecidable and, if the resources of the controller are fixed, decidable but prohibitively expensive.
The algorithms require constraints on controllers:
in \cite{CDL+07}, controllers only respond to changes made by the environment
and, in \cite{PF12b}, their structure must be fixed in advance.
We are not aware of any work for probabilistic real-time models.
\vskip6pt
\noindent
\ifthenelse{\isundefined{\techreport}}{%
An extended version of this paper, with proofs, is available as \cite{tech}.
}{%
This paper is an extended version, with proofs, of \cite{NPZ15}.
}%

\section{Partially Observable Markov Decision Processes}\label{pomdps-sec}

We begin with background material on MDPs
and POMDPs. 
Let $\dist(X)$ denote the set of discrete probability distributions over a set $X$, $\dirac{x}$ the distribution that selects $x \in X$ with probability 1,
and $\Rset$ the set of non-negative real numbers.

\begin{definition}[MDP]
An MDP is a tuple $\mdp {=} (S,\sinit,A,P,R)$ where: $S$ is a set of states; $\sinit \in S$ an initial state; $A$ a set of actions;
$P : S {\times} A \rightarrow \dist(S)$ a (partial) probabilistic transition function;
and $R : S {\times} A \rightarrow \Rset$ a reward function.
\end{definition}
Each state $s$ of an MDP $\mdp$ has a set
$\enab{s}\rmdef \{a\in\acts \mid P(s,a) \text{ is defined}\}$ of \emph{enabled} actions. If action $a\in\enab{s}$ is selected, then the probability of moving to state $s'$ is $P(s,a)(s')$
and a reward of $R(s,a)$ is accumulated in doing so.
A \emph{path} of $\mdp$ is a finite or infinite sequence
$\ipat=s_0{a_0}s_1{a_1}\cdots$,
where $s_i\in S$, $a_i\in\enab{s_i}$ and $P(s_i,a_i)(s_{i+1}){>}0$ for all $i \in \Nset$.
We write $\fpaths_{\mdp}$ and $\ipaths_{\mdp}$, respectively,
for the set of all finite and infinite paths of $\mdp$ starting in the initial state $\sinit$.

A \emph{strategy} of $\mdp$ (also called a \emph{policy} or \emph{scheduler})
is a way of resolving the choice of action in each state,
based on the MDP's execution so far.
%
\begin{definition}[Strategy]
A \emph{strategy} of an MDP $\mdp{=}(S,\sinit,A,P,R)$ is a function $\strat:\fpaths_\mdp {\rightarrow} \dist(\acts)$
such that $\strat(s_0{a_0}s_1\dots s_n)(a){>}0$ only if $a\in\enab{s_n}$.
\end{definition}
A strategy is \emph{memoryless} if its choices only depend on the current state,
\emph{finite-memory} if it suffices to switch between a finite set of modes
and \emph{deterministic} if it always selects an action with probability 1.
The set of strategies of $\mdp$ is denoted by $\strats_\mdp$.

When $\mdp$ is under the control of a strategy $\strat$,
the resulting behaviour is captured by a probability measure $\Pr{\mdp}{\strat}$ over the infinite paths of $\mdp$~\cite{KSK76}.

\startpara{POMDPs}
POMDPs extend MDPs by restricting the extent to which their current state can be observed,
in particular by strategies that control them.
In this paper (as in, e.g., \cite{BBG08,CCT13}), we adopt the following notion of observability.
\begin{definition}[POMDP]\label{pomdp-def}
A POMDP is a tuple $\pomdp {=} (S,\sinit,A,P,R,\obs,\obsf)$ where:
$(S,\sinit,A,P,R)$ is an MDP;
$\obs$ is a finite set of \emph{observations};
and $\obsf : S \rightarrow \obs$ is a labelling of states with observations.
For any states $s,s'\in S$ with $\obsf(s){=}\obsf(s')$,
their enabled actions must be identical, i.e., $\enab{s}{=}\enab{s'}$.
\end{definition}
The current state $s$ of a POMDP cannot be directly determined,
only the corresponding observation $\obsf(s)\in\obs$.
More general notions of observations are sometime used,
e.g., that depend also on the previous action taken or are probabilistic.
Our analysis of probabilistic verification case studies
where partial observation is needed (see, e.g., \sectref{case-sec})
suggests that this simpler notion of observability will often suffice in practice.
To ease presentation, we assume the initial state is observable,
i.e., there exists $\oinit \in \obs$ such that $\obsf(s){=}\oinit$ if and only if $s{=}\sinit$.

The notions of paths, strategies and probability measures given above for MDPs transfer directly to POMDPs.
However, the set $\strats_\pomdp$ of all strategies for a POMDP $\pomdp$
only includes \emph{observation-based strategies},
that is, strategies $\strat$ such that, for any paths $\fpat=s_0{a_0}s_1\dots s_n$ and $\fpat'=s_0'{a_0}'s_1'\dots s_n'$
satisfying $\obsf(s_i)=\obsf(s_i')$ and $a_i=a_i'$ for all $i$, 
we have $\strat(\fpat)=\strat(\fpat')$.

Key properties for a POMDP (or MDP) are
the probability of reaching a target, and
the expected reward cumulated until this occurs.
Let $O$ denote the target (e.g., a set of observations of a POMDP).
Under a specific strategy $\strat$, 
we denote these two properties by $\preach{\pomdp}{\strat}{O}$
and $\ereach{\pomdp}{\strat}{O}$, respectively.

Usually, we are interested in the \emph{optimal} (minimum or maximum) values
$\smash{\preach{\pomdp}{opt}{O}}$ and $\smash{\ereach{\pomdp}{opt}{O}}$,
where $opt\in\{\min,\max\}$. For a MDP or POMDP $\pomdp$:
\[\begin{array}{rclcrcl}
\preach{\pomdp}{\min}{O} & \rmdef & \inf\nolimits_{\strat\in\strats_\pomdp} \preach{\pomdp}{\strat}{O}
& \hspace*{0.5cm} &
\ereach{\pomdp}{\min}{O} & \rmdef & \inf\nolimits_{\strat\in\strats_\pomdp}  \ereach{\pomdp}{\strat}{O}
\\
\preach{\pomdp}{\max}{O} & \rmdef & \sup\nolimits_{\strat\in\strats_\pomdp} \preach{\pomdp}{\strat}{O}
&&
\ereach{\pomdp}{\max}{O} & \rmdef & \sup\nolimits_{\strat\in\strats_\pomdp} \ereach{\pomdp}{\strat}{O}
\end{array}
\]

\startpara{Beliefs}
For POMDPs, determining the optimal probabilities and expected rewards defined above is
undecidable~\cite{MHC03}, making exact solution intractable.
A useful construction, e.g., as a basis of approximate solutions,
is the translation from a POMDP $\pomdp$ to a \emph{belief MDP} $\cB(\pomdp)$,
an equivalent (fully observable) MDP, whose (continuous) state space
comprises \emph{beliefs}, which are probability distributions over the state space of $\pomdp$.
Intuitively, although we may not know which of several observationally-equivalent states
we are currently in, we can determine the likelihood of being in each one,
based on the probabilistic behaviour of $\pomdp$.
\ifthenelse{\isundefined{\techreport}}{%
A formal definition is given below.
}{%
The formal definition is given below, and we include further details in  \appref{belief-appx}.
}%
\begin{definition}[Belief MDP]
Let $\pomdp{=}(S,\sinit,A,P,R,\obs,\obsf)$ be a POMDP.
The \emph{belief MDP\/} of $\pomdp$ is given by $\cB(\pomdp){=}(\dist(S),\dirac{\sinit},A,P^\cB,R^\cB)$
where, for any beliefs $b,b'\in\dist(S)$ and action $a\in A$:
\[
\begin{array}{rcl}
P^\cB(b,a)(b') & = & \mbox{$\sum_{s \in S}$} \; b(s) \cdot \left( \mbox{$\sum_{o \in \obs \wedge b^{a,o}=b'}$} \mbox{$\sum_{s' \in S \wedge \obsf(s')=o}$} \; P(s,a)(s') \right) \\
R^\cB(b,a) & = & \mbox{$\sum_{s \in S}$} \; R(s,a) \cdot b(s)
\end{array}
\]
and $b^{a,o}$ is the belief reached from $b$ by performing $a$ and observing $o$, i.e.: 
\[ 
 b^{a,o}(s') \; = \; \left\{ \begin{array}{cl}
\frac{\sum_{s \in S}  P(s,a)(s') \cdot b(s)}{\sum_{s \in S} b(s) \cdot \left( \sum_{s'' \in S \wedge \obsf(s'')=o} P(s,a)(s'') \right)} & \mbox{if $\obsf(s'){=}o$} \\
0 & \mbox{otherwise.}
\end{array} \right. 
\]
\end{definition}
The optimal values for the belief MDP equal those for the POMDP, e.g.\ we have:
\[
\preach{\pomdp}{\max}{O} = \preach{\cB(\pomdp)}{\max}{T_O} 
\; \; \mbox{and}
\; \; \ereach{\pomdp}{\max}{O} = \ereach{\cB(\pomdp)}{\max}{T_O}
\]
where $T_O = \{ b \in \dist(S) \, | \, \forall s \in S .\, (b(s){>}0 \ra \obsf(s)\in O) \}$.
\section{Partially Observable Probabilistic Timed Automata}\label{poptas-sec}

In this section, we define \emph{partially observable probabilistic timed automata} (POPTAs),
which generalise the existing model of probabilistic timed automata (PTAs) 
with the notion of partial observability from POMDPs explained in \sectref{pomdps-sec}.
We define the syntax of a POPTA, explain its semantics
(as an infinite-state POMDP) and define and discuss the \emph{digital clocks} semantics of a POPTA.



\startpara{Time \& clocks}
As in classical timed automata~\cite{AD94}, we model real-time behaviour using
non-negative, real-valued variables called \emph{clocks},
whose values increase at the same rate as real time.
Assuming a finite set of clocks $\clocks$,
a {\em clock valuation} $v$ is a function $v: \clocks {\ra} \Rset$
and we write $\valuations$ for the set of all clock valuations.
Clock valuations obtained from $v$ by incrementing all clocks by a delay $t \in \Rset$
and by resetting a set $X\subseteq\clocks$ of clocks to zero are denoted $v{+}t$ and $v[X{:=}0]$, respectively,
and we write ${\bf 0}$ if all clocks are 0.
A (closed, diagonal-free) {\em clock constraint\/} $\cc$ is either a conjunction of inequalities of the form
$x {\leq} c$ or $x {\geq} c$, where $x \in \clocks$ and $c \in \Nset$, or $\true$.
We write $v \vinz \cc$ if clock valuation $v$ satisfies clock constraint $\cc$
and use $\CC{\clocks}$ for the set of all clock constraints over $\clocks$.


\startpara{Syntax of POPTAs}
To explain the syntax of POPTAs, we first consider the simpler model of PTAs
and then show how it extends to POPTAs.
\begin{definition}[PTA syntax]\label{pta}
A PTA is a tuple $\pta {=} \ptatuple$ where:
\begin{itemize}
\item
$\loc$ is a finite set of \emph{locations}
and
$\linit\in\loc$ is an {\em initial location;}
\item
$\clocks$ is a finite set of {\em clocks}
and
$\act$ is a finite set of {\em actions;}
\item
$\inv: \loc \ra \CC{\clocks}$ is an {\em invariant condition;}
\item
$\enb: \loc {\times} \act \ra \CC{\clocks}$ is an {\em enabling condition;}
\item
$\probt : \loc {\times} \act \ra \dist(2^{\clocks} {\times} \loc)$ is a {\em probabilistic transition function;}
\item $\ptarew{=} (\lrew,\ptaarew)$ is a \emph{reward structure}
where $\lrew : \loc \rightarrow \Rset$ is a \emph{location reward function} and
$\ptaarew: \loc {\times} \act \ra \Rset$ is an \emph{action reward function}.
\end{itemize}
\end{definition}
A state of a PTA is a pair $(l,v)$ of location $l\in\loc$ and clock valuation $v\in\valuations$. Time $t\in \Rset$ can elapse in the state only if the invariant $\inv(l)$ remains continuously satisfied while time passes and the new state is then $(l,v{+}t)$. An action $a$ is enabled in the state if $v$ satisfies $\enb(l,a)$ and, if it is taken, then the PTA moves to location $l'$ and resets the clocks $X\subseteq\clocks$ with probability $\probt(l,a)(X,l')$.
PTAs have two kinds of rewards:
location rewards, which are accumulated at rate $\lrew(l)$ while in location $l$
and action rewards $\ptaarew(l,a)$, which are accumulated when taking action $a$ in location $l$. PTAs equipped with reward structures are a probabilistic extension of linearly-priced timed automata \cite{BFH+01b}.

\begin{definition}[POPTA syntax]\label{popta-def}
A partially observable PTA (POPTA) is a tuple $\popta = \poptatuple$ where:
\begin{itemize}
\item $\ptatuple$ is a \emph{PTA;}
\item $\obs_\loc$ is a finite set of \emph{observations;}
\item $\obsf_\loc : \loc \rightarrow \obs_\loc$ is a \emph{location observation function}.
\end{itemize}
For any locations $l,l' \in \loc$ with $\obsf_\loc(l){=}\obsf_\loc(l')$,
we require that $\inv(l){=}\inv(l')$
and $\enb(l,a){=}\enb(l',a)$ for all $a \in \act$.
\end{definition} %
The final condition ensures the semantics of a POPTA
yields a valid POMDP:
recall states with the same observation are required to have identical available actions.
Like for POMDPs, for simplicity,
we also assume that the initial location is observable,
i.e., there exists $\oinit \in \obs_\loc$ such that $\obsf_\loc(l){=}\oinit$ if and only if $l{=}\linit$.

The notion of observability for POPTAs is similar to the one for POMDPs, but applied to locations. Clocks, on the other hand, are always observable. The requirement that the same choices must be available in any observationally-equivalent states, implies the same delays must be available in observationally-equivalent states, and so unobservable clocks could not feature in invariant or enabling conditions. The inclusion of unobservable clocks would therefore necessitate modelling the system as a game with the elapse of time being under the control of a second (environment) player.  The underlying semantic model would then be a partially observable stochastic game (POSG), rather than a POMDP. However, unlike POMDPs, limited progress has been made on efficient computational techniques for this model
(belief space based techniques, for example, do not apply in general~\cite{CD14}). Even in the simpler case of non-probabilistic timed games,
allowing unobservable clocks requires algorithmic analysis to restrict
the class of strategies considered~\cite{CDL+07,PF12b}.

Encouragingly, however, we will later show in \sectref{case-sec}
that POPTAs with observable clocks were always sufficient for our modelling and analysis.

\startpara{Restrictions on POPTAs}
At this point, we need to highlight a few syntactic restrictions on the POPTAs treated in this paper.
Firstly, we emphasise that clock constraints appearing in a POPTA,
i.e., in its invariants and enabling conditions, are required to be
\emph{closed} (no strict inequalities) 
and \emph{diagonal-free} (no comparisons of clocks). 
This is a standard restriction when using the digital clocks discretisation~\cite{KNPS06}
which we work with in this paper.

Secondly, a specific (but minor) restriction for POPTAs is that
resets can only be applied to clocks that are non-zero.
The reasoning behind this is outlined later in \egref{counterexample-eg}.
Checking this restriction can easily be done when exploring the discrete
(digital clocks) semantics of the model -- see below and \sectref{mc-sec}.


\startpara{Semantics of POPTAs}
We now formally define the semantics of a POPTA $\popta$, which is given in terms of an infinite-state POMDP.
This extends the standard semantics of a PTA~\cite{NPS13} (as an infinite MDP)
with the same notion of observability we gave in \sectref{pomdps-sec} for POMDPs.
The semantics, $\sem{\popta}_\Tset$, is parameterised by a \emph{time domain} $\Tset$,
giving the possible values taken by clocks.
For the standard (dense-time) semantics of a POPTA, we take $\Tset=\Rset$.
Later, when we discretise the model, we will re-use this definition, taking $\Tset=\Nset$.
When referring to the ``standard'' semantics of $\popta$ we will often drop the subscript $\Rset$ and write $\sempopta$.

\begin{definition}[POPTA semantics]\label{poptasem-def}
Let $\popta{=}( \loc, \linit, \clocks, \act , \inv, \enb, \probt,\ptarew, \obs_\loc,$ $\obsf_\loc)$ be a POPTA.
The {\em semantics} of $\popta$, with respect to the time domain $\Tset$, 
is the POMDP $\sem{\popta}_\Tset{=}(S,\sinit,\act \cup \Tset,P,R,\obs_\loc {\times} \tvaluations, \obsf)$ such that:
\begin{itemize}
\item
$S = \{ (l,v) \in \loc {\times} \tvaluations \ | \ v \vinz \inv(l)\}$
and
$\sinit = (\linit,\mathbf{0})$;
\item
for $(l,v) \in S$ and $a \in \act \cup \Tset$, we have $P((l,v),a) = \mu$ if and only if:
\begin{itemize}
\item{(time transitions)}
$a \in \Tset$, $\mu = \dirac{(l,v + a)}$ and $v {+} t \vinz \inv(l)$ for all $0 {\leq} t {\leq} a$;
\item{(action transition)}
$a \in \act$, $v \vinz \enb(l,a)$ and for $(l',v') \in S$:
\[ \begin{array}{c}
\mu(l',v') = \sum_{X \subseteq \clocks \wedge v' = v[X{:=}0]}  \probt(l,a)(X,l')  
\end{array} \]
\end{itemize}
\vspace*{-0.75em}
\item for any $(l,v) \in S$ and $a \in \act \cup \Tset$, we have
$R((l,v),a) = \left\{ \begin{array}{cl}
\lrew(l){\cdot}a & \mbox{if $a \in \Tset$} \\
\ptaarew(l,a) & \mbox{if $a \in \act$}
\end{array} \right.$
\item
for any $(l,v) \in S$, we have $\obsf(l,v)=(\obsf_\loc(l),v)$.
\end{itemize}
\end{definition}


\begin{figure}[t]
\centering
\subfigure[]{\includegraphics[scale=0.39]{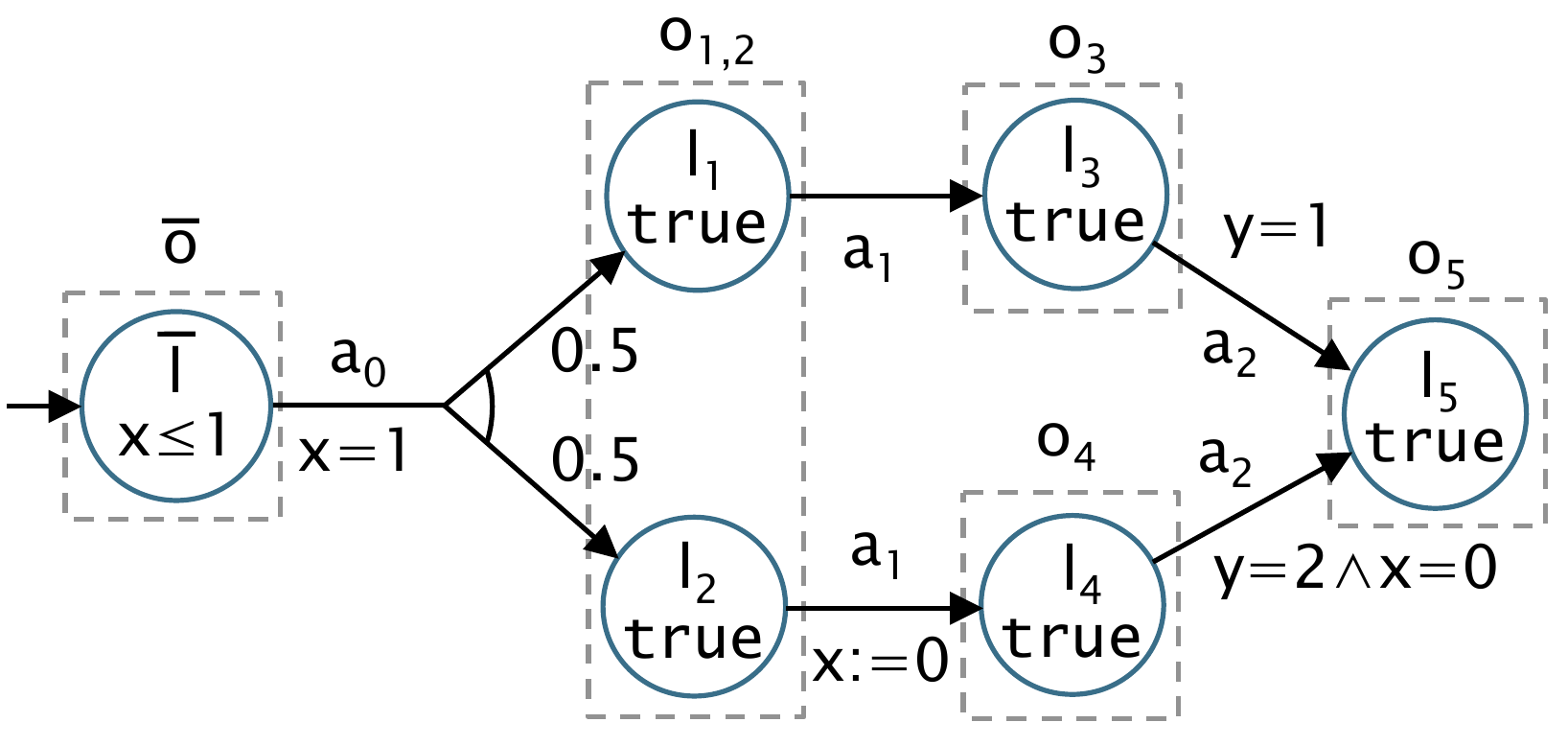}}
\hfill
\subfigure[]{\includegraphics[scale=0.39]{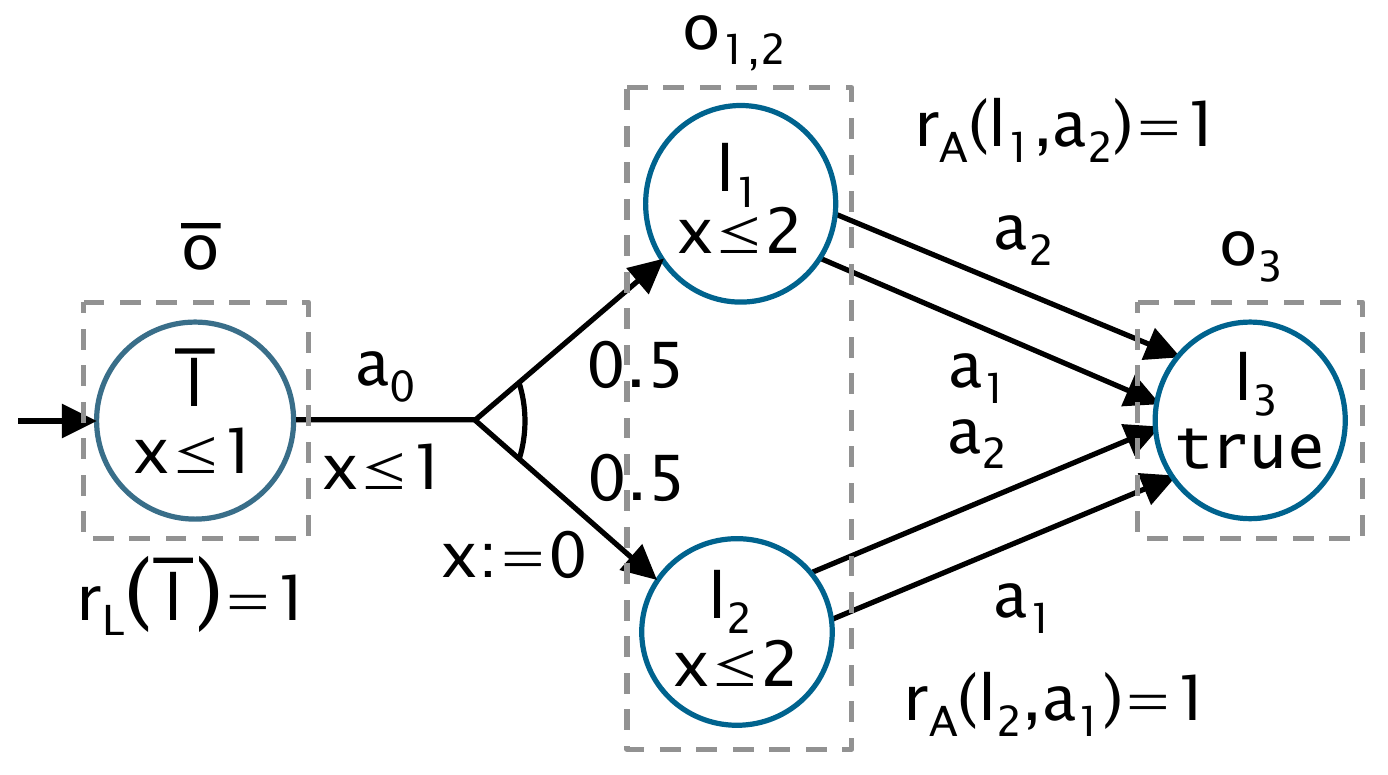}}
\vspace*{-0.4cm}
\caption{Examples of partially observable PTAs (see \egegref{example-eg}{counterexample-eg}).}\label{examples-fig}
\vspace*{-0.4cm}
\end{figure}

\begin{examp}\label{example-eg} 
Consider the POPTA in \figref{examples-fig}(a) with clocks $x,y$.
Locations are grouped according to their observations,
and we omit enabling conditions equal to $\mathtt{true}$.
We aim to maximise the probability of observing $o_5$.
If locations were fully observable, we would leave $\linit$ when $x{=}y{=}1$
and then, depending on whether the random choice resulted in a transition to $l_1$ or $l_2$,
wait $0$ or $1$ time units, respectively, before leaving the location.
This would allow us to move immediately from $l_3$ or $l_4$ to $l_5$,
meaning observation $o_5$ is seen with probability 1.
However, in the POPTA, we need to make the same choice in $l_1$ and $l_2$
since they yield the same observation. 
As a result, at most one of the transitions leaving locations $l_3$ and $l_4$ is enabled,
and the probability of observing $o_5$ is thus at most 0.5.
\end{examp}


\startpara{Digital clocks}
Since the semantics of a POPTA (like for a PTA) is an infinite-state model,
for algorithmic analysis, we first need to construct a \emph{finite} representation.
In this paper, we propose to use the \emph{digital clocks} approach,
generalising a technique already used for PTAs~\cite{KNPS06},
which in turn adapts one for timed automata~\cite{HMP92}.
In short, this approach discretises a POPTA model by transforming its real-valued clocks
to clocks taking values from a bounded set of integers.

For clock $x \in \cX$, let $\bk_x$ denote the greatest constant
to which $x$ is compared in the clock constraints of POPTA $\popta$.
If the value of $x$ exceeds $\bk_x$, its exact value will not affect the satisfaction of any invariants
or enabling conditions, and thus not affect the behaviour of $\popta$.
The digital clocks semantics, written $\sem{\popta}_\Nset$,
can be obtained from \defref{poptasem-def},
taking $\Tset$ to be $\Nset$ instead of $\Rset$.
We also need to redefine the operation $v{+}t$,
which now adds a delay $t\in\Nset$ to a clock valuation $v\in \Nset^{\cX}$:
we say that $v{+}t$ assigns the value $\min \{ v(x) {+} t , \bk_x{+}1 \}$ to each clock $x \in \clocks$.

%

Under the restrictions on POPTAs described above,
the digital semantics of a POPTA preserves the key properties required in this paper,
namely optimal probabilities and expected cumulative rewards
for reaching a specified observation.
\ifthenelse{\isundefined{\techreport}}{%
This is captured by the following theorem (the proof is available in \cite{tech}).}{%
This is captured by the following theorem, which we prove in \appref{correctness-appx}.
}%
\begin{theorem}\label{correctness-thm-new}
If $\popta$ is a closed, diagonal-free POPTA which resets only non-zero clocks,
then, for any set of observations $O$ of $\popta$ and $\mathit{opt} \in\{\min,\max\}$, we have:
\[
\preach{\sem{\popta}_{\Rset}}{opt}{O} = \preach{\sem{\popta}_\Nset}{opt}{O}
\hspace*{0.5em}\mbox{ and }\hspace*{0.6em}
\ereach{\sem{\popta}_{\Rset}}{opt}{O} = \ereach{\sem{\popta}_\Nset}{opt}{O}.
\]
\end{theorem}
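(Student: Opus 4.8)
The idea is to lift the digital-clocks correctness proof for PTAs~\cite{KNPS06} (itself adapting the timed-automata digitisation of~\cite{HMP92}) to the partially observable setting, where the one genuinely new issue is that the two semantics must be matched up in a way that is compatible with \emph{observation-based} strategies. Concretely, I would exhibit a correspondence between the dense POMDP $\sem{\popta}_\Rset$ and the digital POMDP $\sem{\popta}_\Nset$ that simultaneously (a) preserves the optimal reachability probability and expected cumulative reward to $O$ (with the usual convention for paths not reaching $O$), and (b) is compatible with the observation labellings, so that it restricts to a correspondence between the observation-based strategies of the two models. Since, by definition of $\strats_\pomdp$, the optima over all strategies coincide with the optima over observation-based strategies, point (b) is exactly what makes the PTA argument carry over; here we crucially use that clocks are fully observable, so that the observation of a semantic state $(l,v)$ records $v$ verbatim and, along any run, a delay action $t\in\Rset$ equals $v'(x){-}v(x)$ (below the cap) for the clocks $x$ that were not reset — i.e. delays are themselves determined by the observation history.

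\textbf{Easy direction.}
Because $\Nset\subseteq\Rset$ and the redefinition of $v{+}t$ in the digital semantics agrees with the dense one for integer delays below the cap $\bk_x{+}1$, there is an injection from paths of $\sem{\popta}_\Nset$ into paths of $\sem{\popta}_\Rset$ that commutes with the probabilistic transition function, preserves rewards (action rewards are unchanged, and a location reward $\lrew(l){\cdot}t$ with $t\in\Nset$ is the same in both models), and preserves observation sequences, since a digital observation $(\obsf_\loc(l),v)$ with $v\in\Nset^\clocks$ is literally a dense observation. Hence every observation-based digital strategy induces an observation-based dense strategy that induces the same probability measure on observation-indexed paths, and therefore the same values of $\preach{\cdot}{\strat}{O}$ and $\ereach{\cdot}{\strat}{O}$. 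This yields $\preach{\sem{\popta}_\Rset}{\max}{O}\ge\preach{\sem{\popta}_\Nset}{\max}{O}$, $\preach{\sem{\popta}_\Rset}{\min}{O}\le\preach{\sem{\popta}_\Nset}{\min}{O}$, and likewise for expected reward.

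\textbf{Hard direction.}
For the converse we must turn an arbitrary observation-based dense strategy into an observation-based digital strategy that is at least as good when maximising and no better when minimising. Following the structure of the PTA proof, the key step is to reduce to \emph{integral} dense behaviour: show that the dense optima are already attained when time elapses only in integer amounts and only from states with integral clock valuations, after which integral dense strategies stand in an observation-preserving, measure-preserving bijection with digital strategies and the proof is complete. Establishing this reduction is the technical core and is where the syntactic restrictions are used: given a dense run, one perturbs and, where necessary, splits/merges its real-valued delays, segment by segment between consecutive action transitions, to obtain an integral run with the same sequence of locations, resets and (integral) sampled clock valuations — this is possible precisely because invariants and enabling conditions are \emph{closed} and \emph{diagonal-free}, so their satisfaction (including continuously during a delay) is determined by the endpoints of each delay segment via comparisons $x{\leq}c$ / $x{\geq}c$ that are robust under rounding, and the \emph{non-zero reset} restriction (cf.\ \egref{counterexample-eg}) guarantees a reset in the rounded run has exactly the same effect as in the original. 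For reachability this leaves the target-reaching probability unchanged; for expected cumulative reward one chooses the direction of rounding according to whether one is minimising or maximising and uses that location rewards accumulate at a constant rate and that only reward accumulated before the first visit to $O$ counts, so the total elapsed time along the relevant prefix can be matched or moved in the favourable direction.

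\textbf{Main obstacle.}
The delicate point, absent from the PTA case, is that the perturbation must be applied \emph{uniformly across paths that share an observation history}, so that the transformed dense strategy is still observation-based and the bijection with digital strategies is well defined. This is where observability of clocks is essential: any two dense paths with the same observation sequence carry identical clock valuations and identical location-observations at every step, so a canonical rounding rule — depending only on the observed valuations and actions so far — treats them identically. Verifying that this canonical rounding interacts correctly with the probabilistic branching of action transitions, which may lead to distinct but, by \defref{popta-def}, observationally-equivalent locations with the same invariant and enabling conditions, and with the accumulation of expected reward along all branches simultaneously, is the part needing the most care. The remaining bookkeeping — capping clocks at $\bk_x{+}1$, the observable initial state $\sinit{=}(\linit,\mathbf{0})$, and finiteness of the digital POMDP — is routine.
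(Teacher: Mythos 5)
Your route is genuinely different from the paper's, and it is worth being clear about that difference before assessing it. The paper never rounds strategies or runs at all: it introduces a \emph{belief PTA} $\cB(\popta)$ (a PTA whose locations are distributions over observationally equivalent locations of $\popta$), proves that $\sem{\cB(\popta)}_\Tset$ and $\cB(\sem{\popta}_\Tset)$ coincide for any time domain $\Tset$ --- this is where the ``resets only non-zero clocks'' hypothesis is used, since it makes the reset set $X_{[v \mapsto v']}$ uniquely recoverable from the pair of valuations and hence makes the belief update decompose per reset set --- and then applies the digital-clocks theorem of \cite{KNPS06} as a black box to the fully observable belief PTA (noting only that finiteness of the location set is not needed there), finishing with the standard POMDP/belief-MDP equivalence. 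Your proposal instead re-proves a digitisation theorem directly at the level of observation-based strategies of $\sem{\popta}_\Rset$ and $\sem{\popta}_\Nset$.

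The problem is that the hard direction of your argument is a plan rather than a proof, and the unproven step is precisely the one that is genuinely new and genuinely delicate. First, the reduction of dense behaviour to integral behaviour is the entire technical content of \cite{KNPS06}, and that proof is carried out for fully observable strategies; you cannot cite it and must redo its $\epsilon$-digitisation construction so that the rounding is a function of the observation history only, applied consistently across all probabilistic branches (the same cumulative rounding must keep every closed, diagonal-free guard and invariant satisfied along every branch simultaneously, and for $\calR$ properties must move the accumulated time in the right direction for min and for max). Second, and more seriously, your one-sentence appeal to the non-zero-reset restriction (``a reset in the rounded run has exactly the same effect as in the original'') does not address what the restriction is actually needed for. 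As \egref{counterexample-eg} shows, the danger is informational: a fractional delay lets a dense strategy observe, via the clock valuation, whether a clock was reset, and after rounding that delay to an integer the two observation classes can collapse (reset clock $=0$ versus un-reset clock $=\varepsilon$ both becoming $0$), so the rounded strategy can no longer mimic the dense strategy's later, observation-dependent choices. Your proposal must show that under the non-zero-reset restriction such collapses of observation histories cannot occur for any reachable dense run and any choice of the canonical rounding, and no argument for this is given --- it is asserted as part of the ``part needing the most care''. Until that lemma (observation-class-uniform rounding preserves the distinguishability relation on histories and the induced probability measure) is stated and proved, the conversion of an arbitrary observation-based dense strategy into an observation-based digital strategy of equal or better value --- and with it the whole hard direction --- remains open.
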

The proof relies on showing probabilistic and expected reward values
agree on the belief MDPs underlying the POMDPs representing
the dense time and digital clocks semantics. This requires introducing the concept of a belief PTA for a POPTA (analogous to a belief MDP for a POMDP) and results for PTAs \cite{KNPS06}.

\begin{examp}\label{counterexample-eg} 
The POPTA $\popta$ in \figref{examples-fig}(b) demonstrates why our digital clocks approach (\thmref{correctness-thm-new})
is restricted to POPTAs which reset only non-zero clocks.
We aim to minimise the expected reward accumulated before observing $o_3$
(rewards are shown in \figref{examples-fig}(b) and are zero if omitted).
If locations were fully observable, the minimum reward would be 0,
achieved by leaving $\linit$ immediately
and then choosing $a_1$ in $l_1$ and $a_2$ in $l_2$. 
However, if we leave $\linit$ immediately, $l_1$ and $l_2$ are indistinguishable (we observe $(o_{1,2},(0))$ when arriving in both), 
so we must choose the same action in these locations, and hence the expected reward is 0.5.

Consider the strategy that waits $\varepsilon \in (0,1)$ before leaving $\linit$,
accumulating a reward of $\varepsilon$.
This is possible only in the dense-time semantics.
We then observe either $(o_{1,2},(\varepsilon))$ in $l_1$,
or $(o_{1,2},(0))$ in $l_2$. 
Thus, we see if $x$ was reset, determine if we are in $l_1$ or $l_2$, and take action $a_1$ or $a_2$ accordingly
such that no further reward is accumulated before seeing $o_3$, for a total reward of $\varepsilon$.
Since $\varepsilon$ can be arbitrarily small, the minimum (infimum) expected reward
for $\sem{\popta}_\Rset$ is 0. However, for the digital clocks semantics, we can only choose
a delay of $0$ or $1$ in $\linit$. For the former, the expected reward is 0.5, as described above;
for the latter, we can again pick $a_1$ or $a_2$ based on whether $x$ was reset, for a total expected reward of 1.
Hence the minimum expected reward for $\sem{\popta}_\Nset$ is 0.5, as opposed to 0 for  $\sem{\popta}_\Rset$.
\end{examp}

\section{Verification and Strategy Synthesis for POPTAs}\label{mc-sec}

We now present our approach for verification and strategy synthesis for POPTAs
using the digital clock semantics given in the previous section.

\startpara{Property specification}
First, we define a temporal logic for the formal specification of quantitative properties of POPTAs.
This is based on a subset (we omit temporal operator nesting) of the logic presented in~\cite{NPS13} for PTAs.
%
\begin{definition}[Properties]\label{logic-syntax-def}
The syntax of our logic is given by the grammar:
\[
\begin{array}{rclrcl}
\phi & ::= & \calPbp[\psi] \ \big|\ \calRbp [ \rho ] & 
\psi & ::= & \alpha\tuntil\alpha\ \big|\ \alpha\until\alpha \\[2pt]
\alpha & ::= & \true\ \big|\ o\ \big|\ \neg o\ \big|\ \cc \ \big|\ \alpha\wedge\alpha\ \big|\ \alpha\vee\alpha \hspace*{0.5cm} & 
\rho & ::= & \sinstant{=t} \ \big|\ \scumul{\le t}\ \big|\ \sreachrew{\alpha}
\end{array}
\]
where $o$ is an observation, $\cc$ is a clock constraint, ${\bowtie}\in\{{\leq},{<},{\geq},{>}\}$,
$p \in \Qset\cap[0,1]$, $q\in \Qset_{\geq 0}$ and $t \in\Nset$.
\end{definition}
A property $\phi$ is an instance of either the probabilistic operator $\calP$
or the expected reward operator $\calR$.
As for similar logics, $\calPbp[\psi]$ means the probability
of path formula $\psi$ being satisfied is ${\bowtie} p$,
and $\calRbp[\rho]$ the expected value of reward operator $\rho$ is ${\bowtie} q$.
For the probabilistic operator, we allow time-bounded ($\alpha\tuntil\alpha$) and unbounded ($\alpha\until\alpha$)
until formulas, and adopt the usual equivalences such as
$\sreachrew{\alpha} \equiv \true\until\alpha$ (``eventually $\alpha$'').
For the reward operator, we allow $\sinstant{=t}$ (location reward at time instant $t$),
$\scumul{\le t}$ (reward accumulated until time $t$) and $\sreachrew{\alpha}$
(the reward accumulated until $\alpha$ becomes true).
Our propositional formulas ($\alpha$) are Boolean combinations of observations and clock constraints.

We omit nesting of $\calP$ and $\calR$ operators for two reasons:
firstly, the digital clocks approach that we used to discretise time
is not applicable to nested properties (see~\cite{KNPS06} for details);
and secondly, it allows us to use a consistent property specification
for either verification or strategy synthesis problems
(the latter is considerably more difficult in the context of nested formulas~\cite{BGL04}). 
\begin{definition}[Property semantics] \label{logicsemantics-def}
Let $\popta$ be a POPTA with location observation function $\obsf_\loc$ and semantics $\sempopta$. We define satisfaction of a property $\phi$ from \defref{logic-syntax-def}
with respect to a strategy $\strat\in\strats_\sempopta$ as follows:
\[
\begin{array}{lll}
\sempopta,\strat \sat \probopbp{\psi} & \iff	& \Pr{\sempopta}{\strat}(\{\ipat\in\ipaths_{\sempopta}\ |\ \ipat\sat\psi\}) \bowtie p \\
\sempopta,\strat \sat \rewopbr{}{\rho}&\iff& \estrat{\sempopta}{\strat}(\rf{}{\rho})\bowtie q \\
\end{array}
\]
Satisfaction of a path formula $\psi$ by path $\ipat$, denoted $\ipat\models\psi$
and the random variable $\rf{}{\rho}$ for a reward operator $\rho$
are defined identically as for PTAs. Due to lack of space,
we omit their formal definition here and refer the reader to~\cite{NPS13}.
For a propositional formula $\alpha$ and state $s=(l,v)$ of $\sempopta$, we have
$s \sat o$ if and only if $\obsf_\loc(l){=}o$ and $s \sat \cc$ if and only if $v \vinz \cc$.
Boolean operators are standard.
\end{definition}


\startpara{Verification and strategy synthesis}
Given a POPTA $\popta$ and property $\phi$, we are interested in solving
the dual problems of \emph{verification} and \emph{strategy synthesis}.

\begin{definition}[Verification]
The \emph{verification} problem is: given a POPTA $\popta$ and property $\phi$,
decide if $\sempopta{,}\strat\sat\phi$ holds for all strategies $\strat{\in}\strats_{\sempopta}$.
\end{definition}
\begin{definition}[Strategy synthesis]
The \emph{strategy synthesis} problem is: given POPTA $\popta$ and property $\phi$,
find, if it exists, a strategy $\strat{\in}\strats_{\sempopta}$ such that $\sempopta{,}\strat\sat\phi$.
\end{definition}
The verification and strategy synthesis problems for $\phi$ can be solved similarly,
by computing \emph{optimal values} 
for either probability or expected reward objectives:
\[\begin{array}{rclcrcl}
\Pr{\sempopta}{\min}(\psi) & = & \inf\nolimits_{\strat\in\strats_\sempopta} \Pr{\sempopta}{\strat}(\psi)
& \hspace*{0.5cm} &
\estrat{\sempopta}{\min}(\rho) & = & \inf\nolimits_{\strat\in\strats_\sempopta} \estrat{\sempopta}{\strat}(\rho)
\\
\Pr{\sempopta}{\max}(\psi) & = & \sup\nolimits_{\strat\in\strats_\sempopta} \Pr{\sempopta}{\strat}(\psi) 
&&
\estrat{\sempopta}{\max}(\rho) & = & \sup\nolimits_{\strat\in\strats_\sempopta} \estrat{\sempopta}{\strat}(\rho) 
\end{array}
\] 
and, where required, also synthesising an \emph{optimal strategy}. 
For example, verifying $\phi {=} \probop{\geq p}{\psi}$
requires computation of $\Pr{\sempopta}{\min}(\psi)$ since $\phi$ is satisfied by all strategies
if and only if $\smash{\Pr{\sempopta}{\min}(\psi) {\geq} p}$.
Dually, consider synthesising a strategy for which $\phi' {=} \probop{\leq p}{\psi}$ holds.
Such a strategy exists if and only if $\smash{\Pr{\sempopta}{\min}(\psi) {\leq} p}$ and,
if it does, we can use the optimal strategy that achieves the minimum value.
A common practice in probabilistic verification
to simply query the optimal values directly,
using \emph{numerical} properties such as
\smash{$\probop{\mathtt{min=?}}{\psi}$ and $\rewop{r}{\mathtt{max=?}}{\rho}$}.

As mentioned earlier, when solving POPTAs (or POMDPs),
we may only be able to under- and over-approximate optimal values,
which requires adapting the processes sketched above.
For example, if we have determined lower and upper bounds
$\smash{p^\flat \leq \Pr{\sempopta}{\min}(\psi) \leq p^\sharp}$.
We can verify that $\phi {=} \probop{\geq p}{\psi}$ holds if $p^\flat \geq p$
or ascertain that $\phi$ does not hold if $p \geq p^\sharp$.
But, if $p^\flat < p < p^\sharp$, we need to refine our approximation
to produce tighter bounds.
An analogous process can be followed for the case of strategy synthesis. The remainder of this section therefore focuses on how to (approximately)
compute optimal values and strategies for POPTAs.



\startpara{Numerical computation algorithms}
Approximate numerical computation of either
optimal probabilities 
or expected reward values 
on a POPTA $\popta$ is performed with the sequence of steps given below,
each of which is described in more detail subsequently.
We compute both an under- and an over-approximation.
For the former, we also generate a strategy which achieves this value.
\begin{enumerate}\renewcommand{\labelenumi}{(\Alph{enumi})}
\item
We modify POPTA $\popta$, reducing the problem to computing optimal values for
a \emph{probabilistic reachability} or \emph{expected cumulative reward} property~\cite{NPS13};
\item
We apply the \emph{digital clocks} discretisation of \sectref{poptas-sec} 
to reduce the infinite-state semantics $\sem{\popta}_{\Rset}$ of $\popta$
to a \emph{finite-state POMDP} $\sem{\popta}_\Nset$;
\item
We build and solve a \emph{finite abstraction} of the (infinite-state) belief MDP $\cB(\sem{\popta}_\Nset)$
of the POMDP from (B), yielding an \emph{over-approximation}; 
\item
We synthesise and analyse a strategy for $\sem{\popta}_\Nset$, 
giving an \emph{under-approximation};
\item
If required, we \emph{refine} the abstraction's precision and repeat (C) and (D).
\end{enumerate}

\vspace*{-0.2em}
\startpara{(A) Property reduction}
As discussed in~\cite{NPS13} (for PTAs),
checking $\calP$ or $\calR$ properties of the logic of \defref{logic-syntax-def}
can always be reduced to checking either a probabilistic reachability ($\calPbp[\future \alpha]$)
or expected cumulative reward ($\calRbp[\future \alpha]$) property on a modified model.
For example, time-bounded probabilistic reachability ($\calPbp[\tfuture \alpha]$)
can be transformed into probabilistic reachability ($\calPbp[\future (\alpha \wedge y{\leq}t)]$)
where $y$ is a new clock added to the model.
We refer to \cite{NPS13} for full details.

\startpara{(B) Digital clocks}
We showed in \sectref{poptas-sec} that, assuming certain simple
restrictions on the POPTA $\popta$, we can construct a finite POMDP $\sem{\popta}_\Nset$
representing $\popta$ by treating clocks as bounded integer variables.
The translation itself is relatively straightforward,
involving a syntactic translation of the PTA (to convert clocks),
followed by a systematic exploration of its finite state space.
At this point, we also check satisfaction of the
restrictions on POPTAs described in \sectref{poptas-sec}.

\startpara{(C) Over-approximation}
We now solve the finite POMDP $\sem{\popta}_\Nset$.
For simplicity, here and below,
we describe the case of maximum reachability probabilities
(the other cases are very similar) and thus need to compute
$\smash{\preach{\sempopta_\Nset}{\max}{O}}$.
We first compute an \emph{over-approximation},
i.e. an \emph{upper} bound on the maximum probability.
This is computed from an approximate solution to the belief MDP $\cB(\sem{\popta}_\Nset)$,
whose construction we outlined in \sectref{pomdps-sec}.
This MDP has a continuous state space: the set of beliefs $\dist(S)$,
where $S$ is the state space of $\sem{\popta}_\Nset$.

To approximate its solution, we adopt the approach of \cite{Yu06} 
which computes values for a finite set of representative beliefs $G$ whose convex hull is $\dist(S)$.
Value iteration is applied to the belief MDP, using the computed values
for beliefs in $G$ and interpolating to get values for those not in $G$.
The resulting values give the required upper bound.
We use~\cite{Yu06} as it works with \emph{unbounded} (infinite horizon)
and \emph{undiscounted} properties.
There are many other similar approaches~\cite{SPK13},
but these are formulated for discounted or finite-horizon properties.

The representative beliefs can be chosen in a variety of ways.
We follow \cite{Lov91}, where $\smash{G = \{ \frac{1}{M} v  \, | \, v \in \Nset^{|S|} \wedge \sum_{i=1}^{|S|} v(i) {=} M \}}$, i.e. a uniform \emph{grid} with \emph{resolution} $M$.
A benefit is that interpolation is very efficient, using a process called triangulation~\cite{Eav84}.
A downside is that the grid size is exponential $M$.

\startpara{(D) Under-approximation}
Since it is preferable to have two-sided bounds, we also compute an \emph{under-approximation}:
here, a lower bound on $\smash{\preach{\sempopta_\Nset}{\max}{O}}$.
To do so, we first synthesise a finite-memory strategy $\sigma^*$ for $\sem{\popta}_\Nset$
(which is often a required output anyway).
The choices of this strategy are built by stepping through the belief MDP and, for the current belief, choosing an action that achieves the values returned by value iteration in (C) above -- see for example~\cite{SPK13}.
We then compute, by building and solving the finite Markov chain induced by
$\sem{\popta}_\Nset$ and $\sigma^*$, the value
$\preach{\sempopta_\Nset}{\strat^*}{O}$ which is a lower bound  for $\smash{\preach{\sempopta_\Nset}{\max}{O}}$.

\startpara{(E) Refinement}
Finally, although no a priori bound can be given on the error
between the generated under- and over-approximations (recall that the basic problem is undecidable),
asymptotic convergence of the grid based approach \emph{is} guaranteed \cite{Yu06}.
In practice, if the computed approximations do not suffice to verify the required property
(or, for strategy synthesis, $\strat^*$ does not satisfy the property),
then we increase the grid resolution $M$ and repeat steps (C) and (D).

\section{Implementation and Case Studies}\label{case-sec}

We have built a prototype tool for verification and strategy synthesis of POPTAs and POMDPs
as an extension of PRISM~\cite{KNP11}. 
We extended the existing modelling language for PTAs,
to allow model variables to be specified as observable or hidden.
The tool performs the steps outlined in \sectref{mc-sec},
computing a pair of bounds for a given property and synthesising a corresponding strategy.
We focus on POPTAs, but the tool can also analyse POMDPs directly.
The software, details of all case studies, parameters
and properties are available online at: 
\vspace*{-0.2em}
\begin{center}\url{http://www.prismmodelchecker.org/files/formats15poptas/} \end{center}
\vspace*{-0.2em}\noindent
We have developed three case studies to evaluate the tool and techniques, 
discussed in more detail below.
In each case, nondeterminism, probability, real-time behaviour \emph{and} partial observability
are all essential aspects required for analysis.

\startpara{The NRL pump}
The NRL (Naval Research Laboratory) pump~\cite{KMM98} is designed to 
provide reliable and secure communication over networks of nodes
with `high' and `low' security levels.
It prevents a covert channel leaking information
from `high' to `low' through the \emph{timing} of messages and acknowledgements. Communication is buffered and \emph{probabilistic} delays are added to acknowledgements from `high'
in such a way that the potential for information leakage is minimised,
while maintaining network performance. A PTA model is considered in \cite{LMT+04}.

We model the pump as a POPTA using a hidden variable for a secret value	$z\in\{0,1\}$
(initially set uniformly at random)
which `high' tries to covertly communicate to `low'.
This communication is attempted by adding a delay of $h_0$ or $h_1$,
depending on the value of $z$, whenever sending an acknowledgement to `low'.
In the model, `low' sends $N$ messages to `high' and tries to guess $z$
based on the time taken for its messages to be acknowledged.
We consider the maximum probability `low' can (either eventually or within some time frame) correctly guess $z$. We also study the expected time to send all messages and acknowledgements.
These properties measure the security and performance aspects of the pump.
Results are presented in \figref{pump1-fig} varying $h_1$ and $N$ (we fix $h_0{=}2$).
They show that increasing either the difference between $h_0$ and $h_1$ (i.e., increasing $h_1$)
or the number $N$ of messages sent improve the chance of `low' correctly guessing the secret $z$, at the cost of a decrease in network performance.
On the other hand, when $h_0{=}h_1$, however many messages are sent,
`low', as expected, learns nothing of the value being sent
and at best can guess correctly with probability 0.5.

\begin{figure}[t]
\vspace*{-0.1cm}
\centering
\subfigure[]{\includegraphics[scale=0.27]{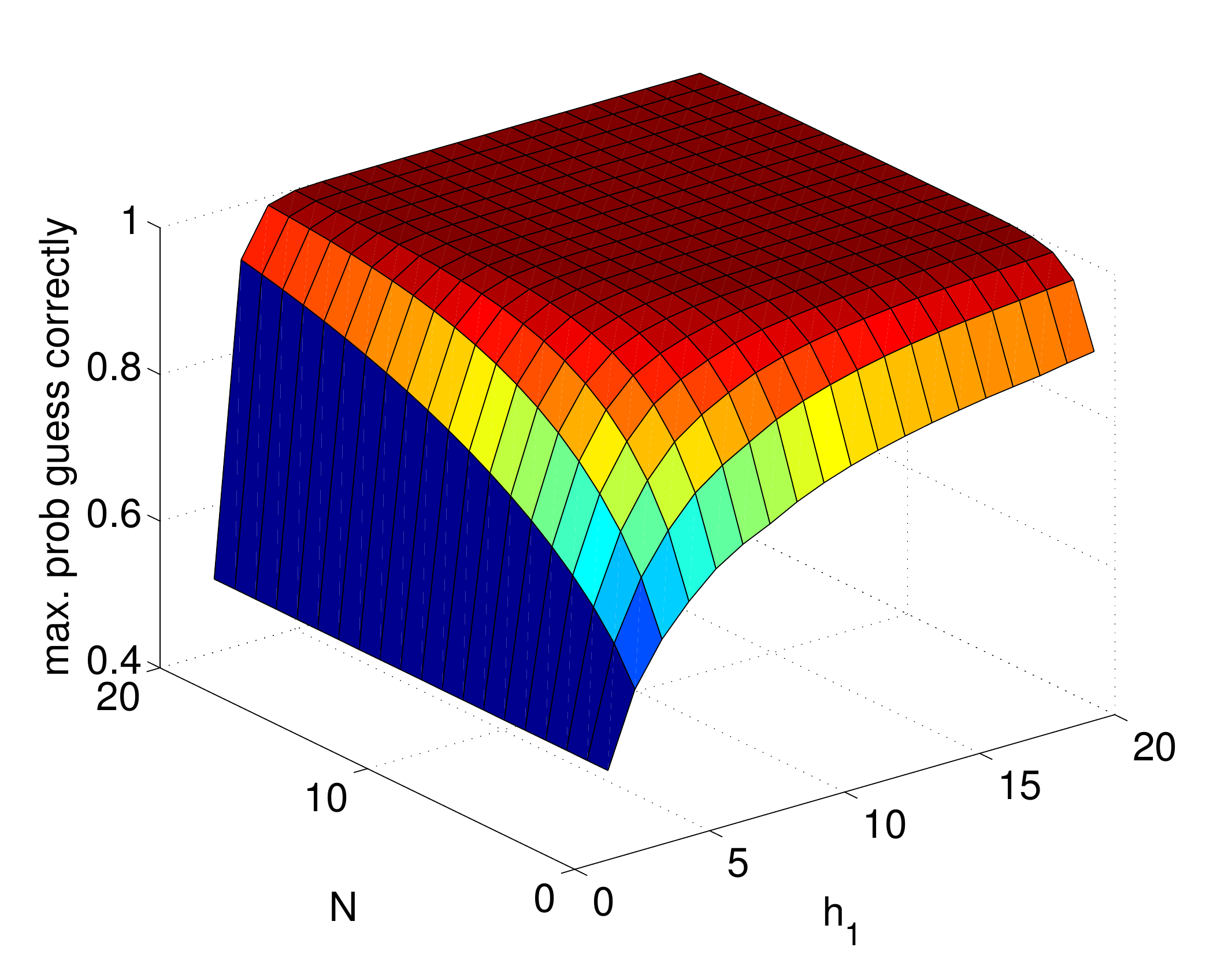}}
\hfill
\subfigure[]{\includegraphics[scale=0.27]{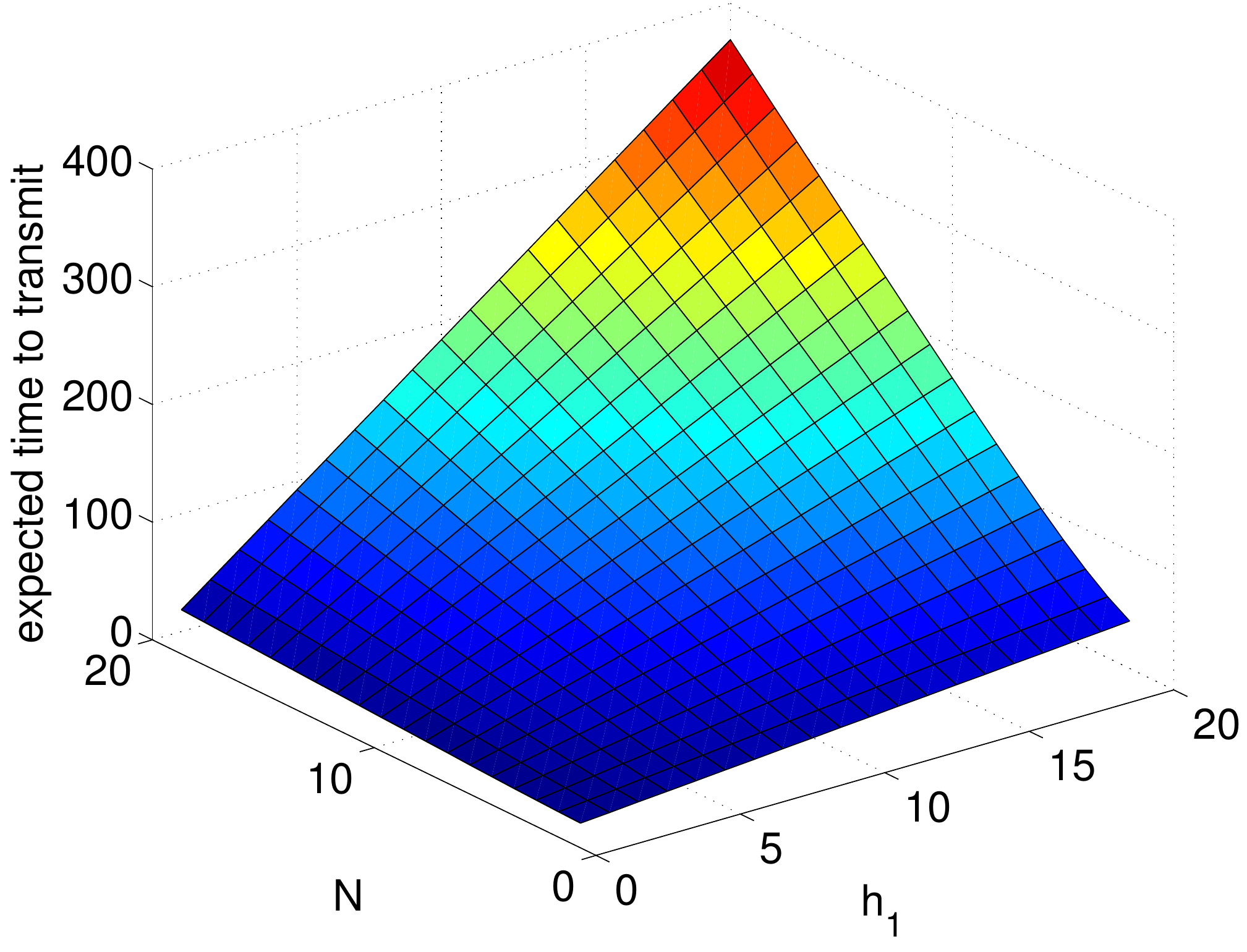}}
\vspace*{-0.3cm}
\caption{Analysing security/performance of the NRL pump:
(a) Maximum probability of covert channel success;
(b) Maximum expected transmission time.}\label{pump1-fig}
\vspace*{-0.4cm}
\end{figure}

\startpara{Task-graph scheduler} 
Secondly, we consider a task-graph scheduling problem adapted from \cite{BFLM11},
where the goal is to minimise the \emph{time} or \emph{energy consumption} required to evaluate
an arithmetic expression on multiple processors with different speeds and energy consumption.
We extend both the basic model of \cite{BFLM11} and the extension from \cite{NPS13}
which uses PTAs to model \emph{probabilistic} task execution times.
A new `low power' state to one processor, 
allowing it to save energy when not in use,
but which incurs a delay when waking up to execute a new task.
This state is entered with probability $\mathit{sleep}$ after each task is completed.
We assume that the scheduler cannot observe whether the processor enters this lower power state, and hence the model is a POPTA.
We generate optimal schedulers (minimising expected execution time or energy usage) using strategy synthesis.

\startpara{Non-repudiation protocol}
Our third case study is a non-repudiation protocol for information transfer
due to Markowitch \& Roggeman~\cite{MR99}.
It is designed to allow an originator $O$ to send information
to a recipient $R$ while guaranteeing \emph{non-repudiation},
that is, neither party can deny having participated in the information transfer.
The initialisation step of the protocol requires $O$ to \emph{randomly} select an integer $N$
in the range $1,\dots,K$
that is never revealed to $R$ during execution. In previous analyses~\cite{LMT05,NPS13},
modelling this step was not possible since no notion of (non-)observability was used.
We resolve this by building a POPTA model of the protocol
including this step, thus matching Markowitch \& Roggeman's original specification.
In particular, we include a hidden variable to store the random value $N$.
We build two models:
a basic one, where $R$'s only malicious behaviour corresponds to stopping early;
and a second, more complex model, where $R$ has access to a decoder. 
We compute the maximum probability that $R$ gains an unfair advantage
(gains the information from $O$ while being able to deny participating).
Our results (see \tabref{table-tab}) show that, for the basic model, this probability equals $1/K$ and $R$ is more powerful in the complex
model.

\startpara{Experimental results}
\tabref{table-tab} summarises a representative set of experimental results
from the analysis of our three case studies. 
All were run on a 2.8 GHz PC with 8GB RAM. 
The table shows the parameters used for each model
(see the web page cited above for details),
the property analysed and various statistics from the analysis:
the size of the POMDP obtained through the digital clocks semantics; number of observations;
number of hidden values (i.e., the maximum number of states with the same observation);
the grid size (resolution $M$ and total number of points); the time taken; and the results obtained.
For comparison, in the rightmost column, we show what result is obtained if the
POPTA is treated as a PTA (by making everything observable).

\begin{table}[!t]
\centering
{\scriptsize \hspace*{-0.1cm}
\begin{tabular}{|c|c|c||r|r|r|r|r|r|c||r|}
\hline
\multicolumn{2}{|c|}{\multirow{3}{2.3cm}{\centering Case study \\ (parameters)}} & \multicolumn{1}{|c||}{\multirow{3}{1.3cm}{\centering Property}}  & \multicolumn{7}{|c||}{Verification/strategy synthesis of POPTA} & \multicolumn{1}{|c|}{\multirow{3}{0.7cm}{\centering PTA \\ result}} \\
\cline{4-10}
\multicolumn{2}{|c|}{} & & \multicolumn{1}{|c|}{States} &\multicolumn{1}{|c|}{Num.} &\multicolumn{1}{|c|}{Num.} & \multicolumn{1}{|c|}{Res.} & \multicolumn{1}{|c|}{Grid} & \multicolumn{1}{|c|}{Time} & \multicolumn{1}{|c||}{Result} & \\ 
\multicolumn{2}{|c|}{} & & \multicolumn{1}{|c|}{($\sem{\popta}_\Nset$)} & \multicolumn{1}{|c|}{obs.} & \multicolumn{1}{|c|}{hidd.} & \multicolumn{1}{|c|}{($M$)} & \multicolumn{1}{|c|}{points} & \multicolumn{1}{|c|}{(s)} & \multicolumn{1}{|c||}{(bounds)} & \\
\hline \hline
\multirow{3}{1.3cm}{\centering \emph{pump} \\ (\emph{$h_1$ $N$})}
& $16$ $2$ & \multirow{3}{2.1cm}{\centering $\mathtt{P_{max=?}}[\mathtt{F} \; \mathit{guess}]$} & 243 & 145 & 3 & 2 & 342 & 0.7 & \mbox{$[0.940,0.992]$} & 1.0 \\ 
& $16$ $2$ & & 243 & 145 & 3 & 40 &  4,845 & 4.0 & \mbox{$[0.940,0.941]$} & 1.0 \\ 
& $16$ $16$ & & 1,559 & 803 & 3  & 2 &  2,316 & 16.8 & \mbox{$[0.999,0.999]$} & 1.0 \\ 
\hline
\multirow{4}{1.3cm}{\centering \emph{pump} \\(\emph{$h_1$ $N$ $D$})}
& $8$ $4$ $50$ & \multirow{4}{2.1cm}{\centering $\mathtt{P_{max=?}}[\mathtt{F}^{\leq D} \mathit{guess}]$} & 12,167 & 7,079 & 3 & 2 & 17,256 & 11.0 & \mbox{$[0.753,0.808]$} & 1.0 \\ 
& $8$ $4$ $50$ & & 12,167 & 7,079 & 3 & 12 & 68,201 & 36.2 & \mbox{$[0.763,0.764]$} & 1.0 \\ 
& $16$ $8$ $50$ & & 26,019 & 13,909 & 3 & 2 & 38,130 & 52.8 & \mbox{$[0.501,0.501]$} & 1.0 \\ 
& $16$ $8$ $100$ & & 59,287 & 31,743 & 3 & 2 & 86,832 & 284.8 & \mbox{$[0.531,0.532]$} & 1.0 \\ 
\hline
\hline
\multirow{3}{1.3cm}{\centering \emph{scheduler} \\ \emph{basic} \\ ($\mathit{sleep}$) }
& $0.25$ & \multirow{3}{2.1cm}{\centering $\mathtt{R_{min=?}}[\mathtt{F} \; \mathit{done}]$ \\ (exec. time)} & 5,002 & 3,557 & 2 & 2 & 6,447 & 3.2  & \mbox{$[14.69,14.69]$} & 14.44 \\ 
& $0.5$ & & 5,002 & 3,557 & 2 & 2 & 6,447 & 3.1 & \mbox{$[17.0,17.0]$} & 16.5 \\ 
& $0.75$ & & 5,002 & 3,557 & 2 & 4 & 9,337 & 3.1 & \mbox{$[19.25,19.25]$} & 18.5 \\ 
\hline
\multirow{3}{1.3cm}{\centering \emph{scheduler} \\ \emph{basic} \\ ($\mathit{sleep}$) }
& $0.25$ & \multirow{3}{2.1cm}{\centering $\mathtt{R_{min=?}}[\mathtt{F} \; \mathit{done}]$ \\ (energy cons.)} & 5,002 & 3,557 & 2 & 4 & 9,337 & 3.1 & \mbox{$[1.335,1.335]$} & 1.237 \\ 
& $0.5$ & & 5,002 & 3,557 & 2 & 2 & 6,447 & 3.1  & \mbox{$[1.270,1.270]$} & 1.186 \\ 
& $0.75$ & & 5,002 & 3,557 & 2 & 2 & 6,447 & 3.2 & \mbox{$[1.204,1.204]$} & 1.155 \\ 
\hline
\multirow{3}{1.3cm}{\centering \emph{scheduler} \\ \emph{prob} \\ ($\mathit{sleep}$) }
& $0.25$ & \multirow{3}{2.1cm}{\centering $\mathtt{R_{min=?}}[\mathtt{F} \; \mathit{done}]$ \\ (exec. time)} & 6,987 & 5,381 & 2 & 2 & 8,593 & 5.8 & \mbox{$[15.00,15.00]$} & 14.75 \\ 
& $0.5$ & & 6,987 & 5,381 & 2 & 2 & 8,593 & 5.8 & \mbox{$[17.27,17.27]$} & 16.77 \\ 
& $0.75$ & & 6,987 & 5,381 & 2 & 4 & 11,805 & 5.0 & \mbox{$[19.52,19.52]$} & 18.77 \\ 
\hline
\multirow{3}{1.3cm}{\centering \emph{scheduler} \\ \emph{prob} \\ ($\mathit{sleep}$) }
& $0.25$ & \multirow{3}{2.1cm}{\centering $\mathtt{R_{min=?}}[\mathtt{F} \; \mathit{done}]$ \\ (energy cons.) } & 6,987 & 5,381 & 2 & 4 & 11,805 & 5.3 & \mbox{$[1.335,1.335]$} & 1.3 \\ 
& $0.5$ & & 6,987 & 5,381 & 2 & 2 & 8,593 & 5.0 & \mbox{$[1.269,1.269]$} & 1.185 \\ 
& $0.75$ & & 6,987 & 5,381 & 2 & 2 & 8,593 & 5.8 & \mbox{$[1.204,1.204]$} & 1.155 \\ 
\hline
\hline
\multirow{4}{1.3cm}{\centering \emph{nrp} \\ \emph{basic} \\ $(K)$}
& 4 &  \multirow{4}{2.1cm}{\centering $\mathtt{P_{max=?}}[\mathtt{F} \; \mathit{unfair}]$} & 365 & 194 & 5 & 8 & 5,734 & 0.8 & \mbox{$[0.25,0.281]$} & 1.0 \\ 
& 4 &  & 365 & 194 & 5 & 24 & 79,278 & 5.9  & \mbox{$[0.25,0.25]$} & 1.0 \\ 
& 8 &  & 1,273 & 398 & 9 & 4 & 23,435 & 4.8 & \mbox{$[0.125,0.375]$} & 1.0 \\ 
& 8 &  & 1,273  & 398 & 9 & 8 & 318,312 & 304.6 & \mbox{$[0.125,0.237]$} & 1.0 \\ 
\hline
\multirow{4}{1.3cm}{\centering \emph{nrp} \\ \emph{complex} \\ $(K)$}
& 4 &  \multirow{4}{2.1cm}{\centering $\mathtt{P_{max=?}}[\mathtt{F} \; \mathit{unfair}]$} & 1,501 & 718 & 5 & 4 & 7,480 & 2.1 & \mbox{$[0.438,0.519]$} & 1.0 \\ 
& 4 &  & 1,501 & 718 & 5 & 12 & 72,748 & 14.8 & \mbox{$[0.438,0.438]$} & 1.0 \\ 
& 8 &  & 5,113 & 1,438 & 9 & 2 & 16,117 & 6.1 & \mbox{$[0.344,0.625]$} & 1.0 \\ 
& 8 &  & 5,113 & 1,438 & 9 & 4 & 103,939 & 47.1 & \mbox{$[0.344,0.520]$} & 1.0 \\ 
\hline
\end{tabular}}
\vspace*{0.1cm}
\caption{Experimental results from verification/strategy synthesis of POPTAs.}\label{table-tab}
\vspace{-0.8cm}
\end{table}

On the whole, we find that the performance of our prototype is good,
especially considering the complexity of the POMDP solution methods
and the fact that we use a relatively simple grid mechanism.
We are able to analyse POPTAs whose integer semantics yields POMDPs of up to 60,000 states,
with experiments usually taking just a few seconds and, at worst, 5-6 minutes. These are, of course, smaller than the standard PTA (or MDP) models that can be verified,
but we were still able to obtain useful results for several case studies.

The values in the rightmost column of \tabref{table-tab} illustrate that the results obtained with POPTAs would not have been possible using a PTA model, i.e.,
where all states of the model are observable.
For the $\mathit{pump}$ example, the PTA gives probability 1 of guessing correctly
(`low' can simply read the value of the secret).
For the $\mathit{scheduler}$ example, the PTA model gives a scheduler
with better time/energy consumption but that cannot be implemented in practice
since the power state is not visible. For the \emph{nrp} models, the PTA gives probability 1 of unfairness as the recipient can read the random value the originator selects.

Another positive aspect is that, in many cases, the bounds generated
are very close (or even equal, in which case the results are exact).
For the $\mathit{pump}$ and $\mathit{scheduler}$ case studies,
we included results for the smallest grid resolution $M$ required
to ensure the difference between the bounds is at most $0.001$.
In many cases, this is achieved with relatively small values
(for the $\mathit{scheduler}$ example, in particular, $M$ is at most 4).
For $\mathit{nrp}$ models, we were unable to do this when $K{=}8$
and instead include the results for the largest grid resolution
for which POMDP solution was possible: higher values could not be handled
within the memory constraints of our test machine.
We anticipate being able to improve this in the future by adapting
more advanced approximation methods for POMDPs~\cite{SPK13}.

\section{Conclusions}

We have proposed novel methods for verification and control of
partially observable probabilistic timed automata,
using a temporal logic for probabilistic, real-time properties and reward measures. We developed techniques based on a digital clocks discretisation and a belief space approximation, then implemented them in a tool and demonstrated their effectiveness on several case studies.

Future directions include more efficient approximation schemes,
zone-based implementations and development of the theory for unobservable clocks. Allowing unobservable clocks, as mentioned previously, will require moving to partially observable stochastic games and restricting the class of strategies.


\vskip9pt \noindent
{\bf Acknowledgments.}
This work was partly supported by the EPSRC grant
``Automated Game-Theoretic Verification of Security Systems'' (EP/K038575/1).
We also grateful acknowledge support from Google Summer of Code 2014.


\bibliographystyle{splncs03}
\bibliography{poptas}


\newpage
\ifthenelse{\isundefined{\techreport}}{}{%
\appendix
\section{Proof of \thmref{correctness-thm-new}}\label{correctness-appx}

As discussed in \sectref{poptas-sec}, we restrict our attention to POPTAs which reset only non-zero clocks. Given any clock valuations $v,v'$ and distinct sets of clocks $X,Y$ such that $v(x){>}0$ for any $x \in X \cup Y$ we have that $v[X{:=}0]=v'$ implies $v[Y{:=}0] \neq v'$. Therefore, under this restriction, for a time domain $\Tset$ and POPTA $\popta$, if there a exists a transition $(l,v)$ to $(l',v')$ in $\sem{\pta}_\Tset$, then there is a unique set of clocks which are reset when this transition is taken and we define this set as follows.
\begin{definition}\label{reset-def}
For any clock valuations $v,v' \in \tvaluations$ we define the set of clocks $X[v \mapsto v']$ as follows: $X_{[v \mapsto v']} = \{ x \in X \, | \,  v(x){>}0 \wedge v'(x){=}0 \}$.
\end{definition}
Using \defref{reset-def}, the transition function of $\sem{\pta}_\Tset$ is such that, for any $(l,v) \in S$ and $a \in \act$, we have $P((l,v),a) {=} \mu$ if and only if $v \vinz \enb(l,a)$ and for $(l',v') \in S$:
\[
\mu(l',v') = \left\{ \begin{array}{cl}
\probt(l,a)(X_{[v \mapsto v']},l')  & \mbox{if $v[X_{[v \mapsto v']}{:=}0]=v'$} \\
0 & \mbox{otherwise.}
\end{array} \right.
\]
Before we present the proof of \thmref{correctness-thm-new},
we require the concept of a belief PTA.
\begin{definition}\label{belpta-def}
Given a POPTA $\popta {=} \poptatuple$,
the \emph{belief PTA} is given by $\cB(\popta)= \bptatuple$
where
\begin{itemize}
\item
$\dist^{\obsf_\loc} \subseteq \dist(\loc)$ where $\lambda \in \dist^{\obsf_\loc}$ if and only if for any $l,l' \in \loc$ such that $\lambda(l){>}0$ and $\lambda(l'){>}0$ we have $\obsf_\loc(l)=\obsf_\loc(l')$; 
\item
the invariant $\inv^\cB : \dist^{\obsf_\loc} {\ra} \CC{\clocks}$ and enabling conditions $\enb^\cB : \dist^{\obsf_\loc} {\times} \act \ra \CC{\clocks}$ are such that for $\lambda \in \dist^{\obsf_\loc}$ and $a \in \act$ we have $\inv^\cB(\lambda){=}\inv(l)$ and $\enb^\cB(\lambda,a){=}\enb(l,a)$ where $l \in \loc$ is such that $\lambda(l){>}0$;
\item
the probabilistic transition function $\probt^\cB : \dist^{\obsf_\loc} {\times} \act \ra \dist(2^{\clocks} {\times} \dist^{\obsf_\loc})$ is such that for any $\lambda,\lambda' \in \dist^{\obsf_\loc}$, $a \in \act$ and $X \subseteq \clocks$ we have:
\[
\begin{array}{c}
\probt^\cB(\lambda,a)(\lambda',X) = 
\sum\limits_{l \in \loc} \lambda(l) \cdot \left( \sum\limits_{o \in O \wedge \lambda^{a,o,X} = \lambda'} \sum\limits_{l' \in \loc \wedge \obsf_\loc(l')=o} \!\!\!\!\!\!\probt(l,a)(l',X) \right) 
\end{array}
\]
where for any $l' \in \loc$ we have
\[
\lambda^{a,o,X}(l') = \left\{ \begin{array}{cl}
\frac{\sum\nolimits_{l \in \loc} \probt(l,a)(l',X) {\cdot} \lambda(l)}{\sum\nolimits_{l \in \loc} \lambda(l) {\cdot} \left( \sum\nolimits_{l' \in \loc \wedge \obsf_\loc(l')=o} \probt(l,a)(l',X) \right)} & \mbox{if $\obsf_\loc(l'){=}o$} \\
0 & \mbox{otherwise;}
\end{array} \right.
\]
\item 
the reward structure $\ptarew^\cB {=} (\lrew^\cB,\ptaarew^\cB)$ is such that
\[ 
\begin{array}{c}
\lrew^\cB(\lambda) = \sum_{l \in \loc} \lambda(l) {\cdot} \lrew(l) \qquad \mbox{and} \qquad
\ptaarew^\cB(\lambda,a) = \sum_{l \in \loc} \lambda(l) \cdot \ptarew(l,a) \, .
\end{array}
\]
\end{itemize}
\end{definition}
For the above to be well defined we require the conditions of the invariant and observation function given in \defref{popta-def}. For any $\lambda \in \dist^{\obsf_\loc}$ we let $o_\lambda$ be the unique observation such that $\obsf_\loc(l)=o_\lambda$ and $\lambda(l)>0$ for some $l \in \loc$.

Next we introduce the semantics of a PTA which, similarly to \defref{poptasem-def}, is parameterised by a \emph{time domain} $\Tset$.
\begin{definition}[PTA semantics]\label{ptasem-def}
Let $\pta=\ptatuple$ be a probabilistic timed automaton.
The {\em semantics of  $\pta$ with respect to the time domain $\Tset$} is the MDP $\sem{\pta}_\Tset=(S,\sinit,\act \cup \Tset,P,R)$ such that:
\begin{itemize}
\item
$S = \{ (l,v) \in \loc \times \tvaluations \ | \ v \vinz \inv(l)\}$
and
$\sinit = (\linit,\mathbf{0})$;
\item
for any $(l,v) \in S$ and $a \in \act \cup \Tset$, we have $P((l,v),a) = \mu$ if and only if one of the following conditions hold:
\begin{itemize}
\item{(time transitions)}
$a \in \Tset$, $\mu = \dirac{(l,v + a)}$ and $v {+} a \vinz \inv(l)$ for all $0 {\leq} t' {\leq} a$;
\item{(action transition)}
$a \in \act$, $v \vinz \enb(l,a)$ and for $(l',v') \in S$:
\[ \begin{array}{c}
\mu(l',v') = \sum_{X \subseteq \clocks \wedge v' = v[X{:=}0]}  \probt(l,a)(X,l')  
\end{array} \]
\end{itemize}
\item for any $(l,v) \in S$ and $a \in \act \cup \Tset$:
$R((l,v),a) = \left\{ \begin{array}{cl}
\lrew(l){\cdot}a & \mbox{if $a \in \Tset$} \\
\ptaarew(l,a) & \mbox{if $a \in \act$.}
\end{array} \right.$
\end{itemize}
\end{definition}
We now show that for a POPTA $\popta$, the semantics of its belief PTA is equivalent to the belief MDP of the semantics of $\popta$.
\begin{proposition}\label{digital-prop}
For any POPTA $\popta$ which resets only non-zero clocks, time domain $\Tset$ we have that the MDPs $\sem{\cB(\popta)}_\Tset$ and $\cB(\sem{\popta}_\Tset)$ are equivalent.
\end{proposition}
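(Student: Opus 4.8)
The plan is to exhibit an explicit, structure-preserving bijection between the two MDPs and conclude they are isomorphic on their reachable fragments (which is all that matters for the optimal values in \thmref{correctness-thm-new}). The bijection I would use sends a state $(\lambda,v)$ of $\sem{\cB(\popta)}_\Tset$ --- where $\lambda\in\dist^{\obsf_\loc}$ and $v \vinz \inv^\cB(\lambda)$ --- to the belief $b_{\lambda,v}\in\dist(S)$ over the state space $S$ of $\sem{\popta}_\Tset$ given by $b_{\lambda,v}(l,v){=}\lambda(l)$ and $b_{\lambda,v}(l',v'){=}0$ whenever $v'{\neq}v$. I would first check well-definedness: if $\lambda(l){>}0$ then $\obsf_\loc(l){=}o_\lambda$, so $\inv(l){=}\inv^\cB(\lambda)$ by \defref{popta-def}, whence $v \vinz \inv(l)$ and $(l,v){\in}S$. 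This map is injective, and --- using $\obsf(l,v){=}(\obsf_\loc(l),v)$ from \defref{poptasem-def} --- its image is exactly the set of beliefs concentrated on a single clock valuation whose locations all share a location-observation, i.e.\ the beliefs whose support lies in one observation class of $\sem{\popta}_\Tset$. I would then note this image contains $\dirac{\sinit}{=}b_{\dirac{\linit},\mathbf 0}$ and, by the transition computations below, is closed under the transitions of $\cB(\sem{\popta}_\Tset)$, so it contains every reachable belief; hence it suffices to show the bijection preserves the initial state, the enabled actions, the transition probabilities and the rewards.

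The easy checks are the initial state ($b_{\dirac{\linit},\mathbf 0}{=}\dirac{(\linit,\mathbf 0)}{=}\dirac{\sinit}$); the rewards (for $a\in\Tset$, $R^\cB(b_{\lambda,v},a){=}\sum_l\lambda(l){\cdot}\lrew(l){\cdot}a{=}\lrew^\cB(\lambda){\cdot}a$, and for $a\in\act$, $R^\cB(b_{\lambda,v},a){=}\sum_l\lambda(l){\cdot}\ptaarew(l,a){=}\ptaarew^\cB(\lambda,a)$, matching \defref{ptasem-def} on the corresponding transitions); and the time transitions: $t\in\Tset$ is enabled in both models under the same condition (all locations in the support of $\lambda$ share invariant $\inv^\cB(\lambda)$), the step is deterministic, and since every successor state $(l,v{+}t)$ carries observation $(o_\lambda,v{+}t)$ there is a unique possible observation, so the belief update gives $b_{\lambda,v}^{t,(o_\lambda,v+t)}{=}b_{\lambda,v+t}$, which corresponds to the unique $\sem{\cB(\popta)}_\Tset$-successor $(\lambda,v{+}t)$.

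The substantive case is action transitions $a\in\act$, and this is where I would spend the effort. Both models enable $a$ under $v \vinz \enb(l,a)$ for $l$ in the support of $\lambda$ (via $\enb^\cB(\lambda,a){=}\enb(l,a)$, and on the belief side because the support states of $b_{\lambda,v}$ all carry observation $(o_\lambda,v)$, hence have identical enabled actions by \defref{pomdp-def}). The two points to establish are: (i) by \defref{reset-def} --- the step where the non-zero-reset hypothesis is used --- for each target valuation $v'$ the only reset set contributing to a move to valuation $v'$ is $X{:=}X_{[v\mapsto v']}$, and only when $v[X{:=}0]{=}v'$; and (ii) for a fixed target belief $b_{\lambda',v'}$, the only observation $o$ of $\sem{\popta}_\Tset$ with $b_{\lambda,v}^{a,o}{=}b_{\lambda',v'}$ possible is $o{=}(o_{\lambda'},v')$, because the support of $b_{\lambda',v'}$ pins down both $v'$ and $o_{\lambda'}$. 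Substituting (i)--(ii) into the belief-update formula gives, for $\obsf_\loc(l'){=}o_{\lambda'}$,
\[
b_{\lambda,v}^{a,(o_{\lambda'},v')}(l',v') \;=\; \frac{\sum_{l\in\loc}\probt(l,a)(X,l')\cdot\lambda(l)}{\sum_{l\in\loc}\lambda(l)\cdot\big(\sum_{l''\in\loc,\,\obsf_\loc(l'')=o_{\lambda'}}\probt(l,a)(X,l'')\big)}\,,
\]
which is verbatim the defining expression for $\lambda^{a,o_{\lambda'},X}(l')$ in \defref{belpta-def}; hence $b_{\lambda,v}^{a,(o_{\lambda'},v')}{=}b_{\lambda',v'}$ iff $\lambda^{a,o_{\lambda'},X}{=}\lambda'$. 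Substituting likewise into $P^\cB(b_{\lambda,v},a)(b_{\lambda',v'})$ and pulling the (outer-index-independent) observation and indicator out of the sum collapses it to $\mathbf 1[\lambda^{a,o_{\lambda'},X}{=}\lambda']\cdot\sum_{l\in\loc}\lambda(l)\sum_{l'\in\loc,\,\obsf_\loc(l')=o_{\lambda'}}\probt(l,a)(X,l')$, which is exactly $\probt^\cB(\lambda,a)(\lambda',X)$, i.e.\ the probability $\sem{\cB(\popta)}_\Tset$ assigns to the transition $(\lambda,v){\ra}(\lambda',v')$ under $a$ by \defref{ptasem-def}. That closes the structural check and yields the isomorphism (and hence the equivalence) of reachable fragments.

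The main obstacle is precisely the collapsing in the last paragraph: one must make sure the two independent sources of branching inside the belief MDP --- the reset set hidden in the POMDP transition $P$, and the observation produced after the action --- are each determined uniquely by the target state $(\lambda',v')$, so that the nested sums in $P^\cB$ and in the belief update genuinely reduce to the single-index sums of \defref{belpta-def}. Uniqueness of the reset set is exactly the content of the non-zero-reset restriction (dropping it lets several reset sets reach the same $(\lambda',v')$, breaking the bijection --- the phenomenon behind \egref{counterexample-eg}); uniqueness of the observation is immediate once one records that clock valuations are observable. Everything else is bookkeeping across \defref{poptasem-def}, the belief-MDP construction of \sectref{pomdps-sec}, \defref{belpta-def} and \defref{ptasem-def}.
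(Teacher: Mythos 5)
Your proposal is correct and follows essentially the same route as the paper's proof: identify the (reachable) beliefs of $\cB(\sem{\popta}_\Tset)$ with pairs $(\lambda,v)$, then match transition and reward functions case-by-case, using observability of the clock valuation to pin down the observation component and the non-zero-reset restriction to pin down the unique reset set $X_{[v \mapsto v']}$, so that the belief update coincides with $\lambda^{a,o,X}$ and the collapsed sum with $\probt^\cB$. The only cosmetic difference is the direction of the correspondence (you map states of $\sem{\cB(\popta)}_\Tset$ to beliefs and argue closure of the reachable fragment, whereas the paper rewrites each reachable belief as such a pair), which changes nothing of substance.
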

\begin{proof}
Consider any POPTA $\popta=\poptatuple$  which resets only non-zero clocks and time domain $\Tset$. To show the MDPs $\sem{\cB(\popta)}_\Tset$ and $\cB(\sem{\popta}_\Tset)$ are equivalent we will first give a direct correspondence between their state spaces and then show that under this correspondence both the transition and reward functions are equivalent.

Considering the belief MDP $\cB(\sem{\popta}_\Tset)$ and using the fact that $\obsf(l,v)=(\obsf_\loc(l),v)$, for any belief states $b,b'$ and action $a$ we have $P^\cB(b,a)(b')$ equals
\[
\sum\limits_{\substack{(o,v_o) \in O\times \tvaluations \\ b^{a,(o,v_o)}=b'}}\sum\limits_{(l,v) \in S} b(l,v) \cdot \left( \sum\limits_{l' \in \loc \wedge \obsf_\loc(l')=o} P((l,v),a)(l',v_o) \right)  
\]
where for any belief $b$, action $a$, observation $(o,v_o)$ and state $(l',v')$, we have $b^{a,(o,v_o)}(l',v')$ equals
\begin{equation}\label{belief-eqn}
\left\{ \begin{array}{cl}
\frac{\sum_{(l,v) \in S}  P((l,v),a)(l',v') \cdot b(l,v)}{\sum_{(l,v) \in S} b(l,v) \cdot \left( \sum_{l'' \in \loc \wedge \obsf_\loc(l'')=o} P((l,v),a)(l'',v') \right)} & \mbox{if $\obsf_\loc(l'){=}o$ and $v'{=}v_o$} \\
0 & \mbox{otherwise}
\end{array} \right.
\end{equation}
and $R^\cB(b,a) =\sum_{(l,v) \in S} R((l,v),a) \cdot b(l,v)$. Furthermore, by \defref{poptasem-def} and since in $\popta$ only non-zero clocks are reset, if $a \in \act$:
\begin{eqnarray}
P((l,v),a)(l',v') &=& \left\{ \begin{array}{cl}
\probt(l,a)(X_{[v \mapsto v']},l') & \mbox{if $v[X_{[v \mapsto v']}{:=}0]=v'$} \\
0 & \mbox{otherwise}
\end{array} \right.
\label{a-trans-eqn}
\end{eqnarray}
while if  $a \in \Tset$:
\begin{eqnarray}
P((l,v),a)(l',v') &=& \left\{ \begin{array}{cl} 1 & \mbox{if $l'{=}l$ and $v' = v {+} a$} \\
0 & \mbox{otherwise.}
\end{array} \right. \label{t-trans-eqn}
\end{eqnarray}
We see that $b^{a,(o,v_o)}(l',v')$ is zero if $v' \neq v_o$, and therefore we can write the belief as $(\lambda,v_o)$ where $\lambda \in \dist(\loc)$ and $\lambda(l) = b^{a,(o,v_o)}(l,v_o)$ for all $l \in \loc$. In addition, for any $l' \in \loc$, if $\lambda(l'){>}0$, then $\obsf_\loc(l'){=}o$. Since the initial belief $\binit$ can be written as $(\dirac{\linit},\mathbf{0})$ and we assume $\obsf_\loc(\linit) \neq \obsf_\loc(l)$ for any $l \neq \linit \in \loc$, it follows that we can write each belief $b$ of $\cB(\sem{\popta}_\Tset)$ as a tuple $(\lambda,v) \in \dist(\loc) \times \tvaluations$ such that for any $l,l' \in \loc$, if $\lambda(l){>}0$ and $\lambda(l'){>}0$, then $\obsf_\loc(l){=}\obsf_\loc(l')$. Hence, we have shown that the states of $\cB(\sem{\popta}_\Tset)$ are equivalent to the states of $\sem{\cB(\popta)}_\Tset$.

We now use this representation for the states of $\cB(\sem{\popta}_\Tset)$ to rewrite the above equations for the transition and reward function of the belief MDP $\cB(\sem{\popta}_\Tset)$. We have the following two cases to consider. 
\begin{itemize}
\item
For any belief states $(\lambda,v)$ and $(\lambda',v')$ and action $a \in\act$: 
\begin{align*}
\lefteqn{\!\!\!\!\! P^\cB((\lambda,v),a)(\lambda',v')= \; \sum\limits_{\substack{o \in \obs_\loc \\ \lambda^{a,(o,v')}=\lambda'}}\sum\limits_{l \in \loc} \lambda(l) \cdot \left( \sum\limits_{\substack{l' \in \loc \\ \obs(l')=o}}  P((l,v),a)(l',v') \right)} \\
&= \; \sum\limits_{\substack{o \in \obs_\loc \\ \lambda^{a,(o,v')}=\lambda'}}\sum\limits_{l \in \loc} \lambda(l) \cdot \left( \sum\limits_{\substack{l' \in \loc \\ \obs(l')=o}}  \probt(l,a)(X_{[v \mapsto v']},l') \right) & \mbox{by \eqnref{a-trans-eqn}} \\
&= \; \sum\limits_{l \in \loc} \lambda(l) \cdot \left( \sum\limits_{\substack{o \in \obs_\loc \\ \lambda^{a,(o,v')}=\lambda'}} \sum\limits_{\substack{l' \in \loc \\ \obs(l')=o}}  \probt(l,a)(X_{[v \mapsto v']},l') \right) & \mbox{rearranging}
\end{align*}
where for any $l' \in \loc$:
\begin{align*}
\lefteqn{\lambda^{a,(o,v')}(l') 
\; = \; \left\{ \begin{array}{cl}
\frac{\sum_{l \in \loc}  P((l,v),a)(l',v') \cdot \lambda(l)}{\sum_{l \in \loc} \lambda(l) \cdot \left( \sum_{l'' \in \loc \wedge \obsf_\loc(l'')=o} P((l,v),a)(l'',v') \right)} & \mbox{if $\obsf_\loc(l')=o$} \\
0 & \mbox{otherwise}
\end{array} \right.} \\
& = \; \left\{ \begin{array}{cl}
\frac{\sum_{l \in \loc}  \probt(l,a)(X_{[v \mapsto v']},l') \cdot \lambda(l)}{\sum_{l \in \loc} \lambda(l) \cdot \left( \sum_{l'' \in \loc \wedge \obsf_\loc(l'')=o} \probt(l,a)(X_{[v \mapsto v']},l') \right)}  & \mbox{if $\obsf_\loc(l'){=}o$} \\
0 & \mbox{otherwise} 
\end{array} \right. \!\!\!\!\! \!\!\!\!\!& \mbox{by \eqnref{a-trans-eqn}} \\
& = \; \lambda^{a,o,X_{[v \mapsto v']}} & \mbox{by \defref{belpta-def}.}
\end{align*}
Using this result together with \defref{belpta-def} and \defref{ptasem-def} it follows that the transition functions of $\cB(\sem{\popta}_\Tset)$ and $\sem{\cB(\popta)}_\Tset$ are equivalent in the case. Considering the reward functions, we have $R^\cB((\lambda,v),a) = \sum_{l\in \loc} \ptarew_\act(l,a) {\cdot} \lambda(l)$ which, again from \defref{belpta-def} and \defref{ptasem-def}, shows that the reward functions of  $\cB(\sem{\popta}_\Tset)$ and $\sem{\cB(\popta)}_\Tset$ are equivalent.
\item
For any belief states $(\lambda,v)$ and $(\lambda',v')$ and time duration $t \in \Tset$:
\begin{align*}
P^\cB((\lambda,v),t)(\lambda',v') &  = \; \left\{ \begin{array}{cl}
\sum_{l \in \loc} \lambda(l) \cdot P((l,v),a)(l,v')  & \mbox{if $\lambda^{t,(o_\lambda,v')}{=}\lambda'$} \\
0 & \mbox{otherwise} \end{array} \right. 
\end{align*}
where for any $l' \in \loc$:
\begin{align*}
\lambda^{t,(o_\lambda,v')}(l') 
& = 
\left\{ \begin{array}{cl}
\frac{\lambda(l')}{\sum_{l \in \loc} \lambda(l)} & \mbox{if $v'=v {+} t$} \\
0 & \mbox{otherwise}
\end{array} \right. \\
& =  \left\{ \begin{array}{cl}
\lambda(l') & \mbox{if $v'=v {+} t$} \\
0 & \mbox{otherwise}
\end{array} \right. & \mbox{since $\lambda$ is a distribution.}
\end{align*}
Substituting this expression for $\lambda^{t,(o_\lambda,v')}$ into that of $P^\cB((\lambda,v),t)$ we have:
\begin{align*}
\lefteqn{P^\cB((\lambda,v),t)(\lambda',v')} \\
& \lefteqn{= \; \left\{ \begin{array}{cl}
\sum_{l \in \loc} \lambda(l) \cdot \left( \sum_{l' \in \loc} P((l,v),a)(l',v') \right) & \mbox{if $\lambda{=}\lambda'$ and $v'=v {+} t$} \\
0 & \mbox{otherwise} \end{array} \right.} \\
& = \; \left\{ \begin{array}{cl}
\sum_{l \in \loc} \lambda(l)  & \mbox{if $\lambda{=}\lambda'$ and $v'=v {+} t$} \\
0 & \mbox{otherwise} \end{array} \right. & \mbox{by \eqnref{t-trans-eqn}} \\
& = \; \left\{ \begin{array}{cl}
1 & \mbox{if $\lambda{=}\lambda'$ and $v'=v {+} t$} \\
0 & \mbox{otherwise} \end{array} \right. & \mbox{since $\lambda$ is a distribution} 
\end{align*}
which from \defref{belpta-def} and \defref{ptasem-def} shows the transition functions of $\cB(\sem{\popta}_\Tset)$ and $\sem{\cB(\popta)}_\Tset$ are equivalent. For the reward function, we have $R^\cB((\lambda,v),t) = \sum_{l\in \loc} ( \ptarew_\loc(l) {\cdot} t ) {\cdot} \lambda(l)$ and from \defref{belpta-def} and \defref{ptasem-def} this implies that the reward functions of $\cB(\sem{\popta}_\Tset)$ and $\sem{\cB(\popta)}_\Tset$ again are equivalent.
\end{itemize}
Since these are the only cases to consider, both the transition and reward functions of $\cB(\sem{\popta}_\Tset)$ and $\sem{\cB(\popta)}_\Tset$ are equivalent completing the proof.
\qed
\end{proof}
We are now in a position to present the proof of \thmref{correctness-thm-new}.
\begin{proof}[of \thmref{correctness-thm-new}]
Consider any closed diagonal-free POPTA $\popta$ which resets only non-zero clocks and set of observations $O$ of $\popta$. Since the PTA $\cB(\popta)$ is closed and diagonal-free, using results presented in \cite{KNPS06}, we have that:
\begin{equation}\label{thm1-eqn}
\preach{\sem{\cB(\popta)}_\Rset}{opt}{T_O} = \preach{\sem{\cB(\popta)}_\Nset}{opt}{T_O}
\hspace*{0.5em}\mbox{ and }\hspace*{0.6em}
\ereach{\sem{\cB(\popta)}_\Rset}{opt}{T_O} = \ereach{\sem{\cB(\popta)}_\Nset}{opt}{T_O}
\end{equation}
where $opt\in\{\min,\max\}$ and $T_o = \{ (l,v) \, | \, \obsf(l) \in O \}$. Note that, although  \cite{KNPS06} considers only PTAs with finite locations, the corresponding proofs do not require this fact, and hence the results carry over to $\cB(\popta)$ which has an uncountable number of locations. 

Due to the equivalence between a POMDP and its belief MDP we have: 
\begin{equation}\label{thm2-eqn}
\preach{\sem{\popta}_\Tset}{opt}{O} = \preach{\cB(\sem{\popta}_\Tset)}{opt}{T_O}
\hspace*{0.5em}\mbox{ and }\hspace*{0.6em}
\ereach{\sem{\popta}_\Tset}{opt}{O} = \ereach{\cB(\sem{\popta}_\Tset)}{opt}{T_O}. 
\end{equation}
Using \propref{digital-prop} and since $\popta$ resets only non-zero clocks we have $\sem{\cB(\popta)}_\Tset=\cB(\sem{\popta}_\Tset)$. Combining this with \eqnref{thm1-eqn} and \eqnref{thm2-eqn} the theorem follows. \qed
\end{proof}
\section{Construction of the Belief MDP}\label{belief-appx}

For the convenience of the reviewers, below we include, for a given POMDP $\pomdp$, the construction of the corresponding belief MDP $\cB(\pomdp)$.

First, let us suppose we are in a belief state $b$, perform action $a$ and observe $o$. Based on this new information we move to a new belief state $b^{a,o}$ using the observation function of $\pomdp$. Let us now construct this new belief state. First, by construction, we have for any $s' \in S$:
\begin{align*}
 b^{a,o}(s') & = \; \mathbf{Pr}[\, s' \, | \, o,a,b \, ]  \\
& = \frac{\mathbf{Pr}[ \, s' ,o,a,b \, ]}{\mathbf{Pr}[ \, o,a,b \, ]} & \mbox{by definition of conditional probabilities} \\
& = \; \left\{ \begin{array}{cl} 
\dfrac{\mathbf{Pr}[ \, s' ,a,b \, ]}{\mathbf{Pr}[ \, o,a,b \, ]} & \mbox{if $\obsf(s')=o$} \\
 0 & \mbox{otherwise} \end{array} \right.  & \mbox{by definition of $\obsf$.}
\end{align*}
Now considering the numerator in the first case, since the value of $s'$ is dependent on the other values:
\begin{align*}
\mathbf{Pr}[ \, s' ,a,b \, ] & = \;  \mathbf{Pr}[ \, s' \, | \, a,b \, ]\cdot  \mathbf{Pr}[ \, a,b \, ]  \\
& = \;  \mathbf{Pr}[ \, s' \, | \, a,b \, ]\cdot 1  & \mbox{since $b$ and $a$ are fixed} \\
& = \;  \mbox{$\sum_{s \in S}$}  \mathbf{Pr}[ \, s' \, | \, a,s \, ] \cdot b(s)  & \mbox{definition of $b$} \\
& = \;  \mbox{$\sum_{s \in S}$}  P(s,a)(s') \cdot b(s) & \mbox{definition of $P$.}
 \end{align*}
For the denominator since $o$ is dependent on $b$ and $a$ we have:
\begin{align*}
\lefteqn{\mathbf{Pr}[ \, o,a,b \, ] \;= \; \mathbf{Pr}[ \, o \, | \, a,b \, ] \cdot \mathbf{Pr}[ \, a,b \, ]} \\
& = \; \mathbf{Pr}[ \, o \, | \, a,b \, ] \cdot 1 & \mbox{since $b$ and $a$ are fixed} \\
& = \; \mbox{$\sum_{s \in S}$} \mathbf{Pr}[ \, o \, | \, a,s \, ] \cdot b(s) & \mbox{by definition of $b$} \\
& = \; \mbox{$\sum_{s \in S}$} \left( \mbox{$\sum_{s' \in S}$} \mathbf{Pr}[ \, o \, | \, s',a \, ] \cdot P(s,a)(s') \right) \cdot b(s) & \mbox{by definition of $P$} \\
& = \; \mbox{$\sum_{s \in S}$} \left( \mbox{$\sum_{s' \in S \wedge \obsf(s')=o}$} P(s,a)(s') \right) \cdot b(s) & \mbox{by definition of $\obsf$.}
\end{align*}
Combining these results (and rearranging) we have:
\begin{equation*}\label{bao-eqn}
 b^{a,o}(s') \; = \; \left\{ \begin{array}{cl}
\frac{\sum_{s \in S}  P(s,a)(s') \cdot b(s)}{\sum_{s \in S} b(s) \cdot \left( \sum_{s'' \in S \wedge \obsf(s'')=o} P(s,a)(s'') \right)} & \mbox{if $\obsf(s'){=}o$} \\
0 & \mbox{otherwise.}
\end{array} \right.
\end{equation*}
Now using this we can define the probabilistic transition function of the belief MDP $\cB(\pomdp)$. Suppose we are in a belief state $b$ and we perform action $a$. Now the probability we move to belief $b'$ is given by:
\begin{equation*}\label{belieftrans-eqn}
P^\cB_T(b,a)(b') = \mathbf{Pr}[ \, b' \, | \, a,b \, ] \; = \; \mbox{$\sum_{o \in O}$} \mathbf{Pr}[ \, b' \, | \, o,a,b \, ] \cdot \mathbf{Pr}[ \, o \, | \, a,b \, ] \, .
\end{equation*}
The first term in the summation ($\mathbf{Pr}[ \, b' \, | \, o,a,b \, ]$) is the probability of being in belief $b'$ after being in belief $b$, performing action $a$ and observing $o$. Therefore by definition of $b^{a,o}$, this probability will equal 1 if $b'$ equals $b^{a,o}$ and 0 otherwise.
\\ \\
For the second term, as in the derivation of the denominator above, we have:
\begin{align*} \label{oba-eqn}
\mathbf{Pr}[ \, o \, | \, a,b \, ] = \mbox{$\sum_{s \in S}$} b(s) \cdot \left( \mbox{$\sum_{s' \in S \wedge \obsf(s')=o}$} P(s,a)(s') \right) 
\end{align*}
This completes the construction of the transition function of the belief MDP. 
\\ \\
It remains to construct the reward function $R^\cB$ of the belief MDP. In this case, we just have to take the expectation of the reward with respect to the current belief, i.e. for any action $a$ and belief state $b$:
\begin{equation*}\label{beliefreward-eqn}
R^\cB(b,a) = \mbox{$\sum_{s \in S}$} R(s,a) \cdot b(s) \, .
\end{equation*}
The optimal values for the belief MDP equal those for the POMDP, e.g.\ we have:
\[
\preach{\pomdp}{\max}{O} = \preach{\cB(\pomdp)}{\max}{T_O} 
\; \; \mbox{and}
\; \; \ereach{\pomdp}{\max}{O} = \ereach{\cB(\pomdp)}{\max}{T_O}
\]
where $T_O = \{ b \in \dist(S) \, | \, \forall s \in S .\, (b(s){>}0 \ra \obsf(s)\in O) \}$.
}%
\end{document}